\newtheorem{Definition}{Definition}
\newtheorem{Lemma}{Lemma}
\newtheorem{Theorem}{Theorem}
\newtheorem{Problem}{Problem}
\newtheorem{Observation}{Observation}
\newcommand{\ignore}[1]{{}}
\newcommand{\PreserveBackslash}[1]{\let\temp=\\#1\let\\=\temp}
\newcolumntype{C}[1]{>{\PreserveBackslash\centering}m{#1}}
\newcolumntype{R}[1]{>{\PreserveBackslash\raggedleft}m{#1}}
\newcolumntype{L}[1]{>{\PreserveBackslash\raggedright}m{#1}}
\begin{document}

\date{}

\title{\Large \bf Two Dimensional Router: Design and Implementation}

\author{
{\rm Shu Yang, Laizhong Cui}\\
Shenzhen University
\and
{\rm Xinhao Deng, Qi Li, Mingwei Xu, Jianping Wu}\\
Tsinghua University
\and
{\rm Yulei Wu}\\
University of Exeter
\and
{\rm Dan Wang}\\
Hong Kong Polytechnic University
} 

\maketitle

\thispagestyle{empty}

\subsection*{Abstract}
Higher dimensional classification has attracted more attentions with increasing demands for more
flexible services in the Internet. In this paper, we present the design and implementation of a
two dimensional router (TwoD router), that makes forwarding decisions based on both destination
and source addresses. This TwoD router is also a key element in our current effort towards two
dimensional IP routing.

With one more dimension, the forwarding table will grow explosively given a straightforward
implementation. As a result, it is impossible to fit the forwarding table to the current TCAM,
which is the de facto standard despite its limited capacity. To solve the explosion problem,
we propose a forwarding table structure with a novel separation of TCAM and SRAM. As such, we
move the redundancies in expensive TCAM to cheaper SRAM, while the lookup speed is comparable
with conventional routers. We also design the incremental update algorithms that minimize the
number of accesses to memory. We evaluate our design with a real implementation on a commercial
router, Bit-Engine 12004, with real data sets. Our design does not need new devices, which is
favorable for adoption. The results also show that the performance of our TwoD router is promising.

\section{Introduction}
To provide reachability services to the Internet users, conventional Internet routers classify packets
based only on destination address,  Although one dimensional routers are adequate for destination-based
routing, there are increasing demands for higher dimensional routers \cite{Kim08}, for security, traffic engineering,
quality of service, etc. Among the higher dimensional routers, two dimensional routers (TwoD routers), that classify
packets based on both destination and source addresses, have gained considerable attentions
\cite{Baboescu06}\cite{Suri03}\cite{Lu05}, due to the important semantics of destination and source addresses
\cite{network-algorithmics}. For example, TwoD routers can easily express the policies between host and host,
or network and network.

China Education and Research Network 2 (CERNET2), the largest naive IPv6 network around the world, is now deploying
Two Dimensional-IP (TwoD-IP) routing \cite{TwoD-IP-Draft}. More specifically, the routing decisions will be based not
only on destination address, but also on the source address. Such extension provides rooms to solve problems of the
past and foster innovations in the future. TwoD router is a key element in TwoD-IP routing.

There has been many research works on TwoD routers. Most of them focus on software-based solutions
\cite{Wang09}\cite{Vamanan10}\cite{Baboescu05}, however, software-based solutions need
many accesses to memory, and cause non-deterministic lookup time.
The problem gets worse after deploying IPv6, where more bits should be matched. Hardware-based,
especially TCAM-based solutions are the de facto standard for core routers, due to their
constant lookup time and high speeds. Despite its high speeds, TCAM is limited by its low
capacity, large power consumption and high cost \cite{meiners2010}. The largest TCAM chip available currently
can only accommodate 1 million IPv4 prefixes \cite{Meiners11}.

TCAM resources are further limited in TwoD routers. Two dimensional classifiers widely adopt the
traditional Cisco Access Control List (ACL) structure (we call it ACL-like structure thereafter), e.g., CERNET2
is using this structure. In Figure \ref{tab-acl-like-table}, we show a typical table within
ACL-like structure, where destination and source prefixes (having 4 bits for brevity) are concatenated
as an entry in TCAM. For example, receiving a packet with destination address of 1011 and source address
of 1111, router will forward the packet to 1.0.0.2, after matching destination prefix 101* and source prefix 11**
according to the longest match first (LMF) rule. This `fat' TCAM structure provides fast lookup speeds, however,
this structure greatly increases the TCAM resources in TwoD routers, due to 1) it doubles the
width of a TCAM entry, e.g., 288 bits (typical TCAM width) are needed within
IPv6; 2) in the worst case, the number of TCAM entries can be $O(N\times M)$, where $N$ and $M$ are the
space of destination and source addresses. The ACL-like structure works well within a few entries, however,
it becomes inefficient when the number of entries increases. If TwoD-IP routing is deployed, the number of entries
will predictably increase more rapidly, e.g., CERNET2 wants to carry out policy routing between about 6,000 destination
prefixes and 100 source prefixes, resulting in 600,000 entries in TCAM.

\begin{table}[h!]
\scriptsize{
  \centering
  \begin{tabular}{cc|c}
    \toprule
    Destination prefix  &  Source prefix  & Action\\
    \midrule
    111*                &  111*           & Forward to 1.0.0.0     \\
    111*                &  100*           & Forward to 1.0.0.1     \\
    100*                &  111*           & Forward to 1.0.0.2     \\
    101*                &  11**           & Forward to 1.0.0.2     \\
    10**                &  11**           & Forward to 1.0.0.3     \\
    \bottomrule
  \end{tabular}
  \caption{\footnotesize{Table with ACL-like structure in TCAM}}
  \label{tab-acl-like-table}
}
\end{table}

In this paper, to relieve the contradictions, we put forward a new forwarding table structure called FIST (\textbf{FI}B
\textbf{S}tructure for \textbf{T}woD-IP). The key idea of FIST is to store destination and source prefixes in two
separate TCAM tables, and store other information in SRAM, which is much cheaper and less power consumptive than TCAM.
In Figure \ref{tab-acl-like-table}, we need to store destination prefixes 111*, 100*, 101* and 10**
in one TCAM table, and source prefixes 111*, 100*, 11** in another TCAM table. Through moving the redundancies from TCAM
to SRAM, we can reduce the TCAM storage space, because 1) TCAM width can be reduced to be one half, e.g., 144 bits are
enough within IPv6; 2) reducing the number of entries in TCAM, i.e., each prefix appears only once. In the worst
case, there are $N+M$ TCAM entries. Trivial FIST may increase the SRAM storage space, thus we
develop a set of techniques for better scalability. We show that the redundancies in SRAM can be largely removed,
due to the flexibility in SRAM.

Within FIST, each destination prefix points to a row, each source prefix
points to a column, and they both together point to a two dimensional array element (cell) in SRAM, through
which we can compute the action (or next hop) information. When a packet arrives, we can match its destination and
source in parallel in TCAM, and find the next hop information in SRAM. The lookup process can be pipelined, and the
lookup time is comparable with current Internet routers.

However, within FIST, there may exist confliction, i.e., matching a wrong prefix, after removing the binding
relation between destination and source prefixes in TCAM. For example, if a packet with destination address of 1011
and source address of 1111 arrives, destination prefix 101* and source prefix 111* will be matched by applying LMF
rule in each separate TCAM table, however, there does not exist any entry with destination prefix 101* and source
prefix 111*. To resolve such confliction, we pre-compute the right actions for all conflicted cases.
Such pre-computation guarantees the correctness, but it becomes impractical when updates happen frequently,
because it needs re-computation for all conflicted cases, and causes large number of accesses to SRAM once update
happens. To support incremental updates, we propose a new data structure called colored tree, through which we can
minimize the computation cost and number of accesses to memory.

We implement the FIST on a commercial router, Bit-Engine 12004. Through redesigning the hardware logic, we do not
need new devices. We carry out comprehensive evaluations with the real implementation, using the real topology,
FIB, prefix and traffic data from CERNET2. The results show that FIST-based TwoD router can achieve linecard speeds,
save TCAM and SRAM storage space, and bring acceptable update burden.

\section{Overview of TwoD Router Design}

We want the performance (i.e., packet processing) of the TwoD Router to be comparable
with the current Internet routers. We choose TCAM as our base line design as TCAM is the
key factor for the fast speed of the current routers.

The immediate change that TwoD-IP routing brings to the picture is the forwarding table size.
More specifically, the Forwarding Information Base (FIB) will tremendously increase. Note that
a first thought might think that the routing table only doubles. This is not true, as for each
destination address, it corresponds to different source address. A straightforward implementation,
i.e., ACL-like structure, means the FIB table changes from \{destination\} $\rightarrow$ \{action\}
to \{(destination, source)\} $\rightarrow$ \{action\}. This increases the FIB size by an order and
a practical consequence is that TCAM cannot hold entries of such scale. Current TCAM storage
is 1 million and current destination prefix number is 400,000 \cite{bgp-routing-table-size}. If
TwoD-IP is implemented by a straightforward approach, even with 100 source prefixes, it is already
far beyond the TCAM storage.

We solve this problem by proposing a novel forwarding table structure FIST (see Fig. \ref{fig-storage-structure-twod-ip}).
The key of FIST is a novel separation of TCAM and SRAM. TCAM contributes to fast lookup and SRAM contributes
to a larger memory space. Overall, FIST consumes $O(N+M)$ TCAM storage space.

Another difficulty is the update action. In principle, an update of a destination prefix in the
TwoD router may incur an update for each source prefix associated with this destination prefix
and vice versa. This indicates that the update of a single entry in TwoD router is, given a
straightforward design, the same as updating a full table of the current router.

Suppose that there are 10,000 source prefixes, and 500 updates on destination prefixes per second.
In the worst case, there are 5,000,000 updates on SRAM per second, which almost exceeds the
speed of hardware (in BitWay 12004, linecards work at 100MHz, and linecards need
20 clock cycles for a read/write operation).

We try every aspect to reduce the update complexity. We formulate an optimal transformation problem
where we want to minimize the total number of read/write for each update. To solve this problem we
propose a colored tree to organize the entries and we prove that we can minimize the computation
complexity and number of accesses to memory during update actions.

In the following paper, Section \ref{sec-fist-structure} introduces the FIST structure and we prove
its correctness during packet forwarding. We further present the lookup process on FIST. We
discuss the incremental update action in Section \ref{sec-update}. In Section \ref{sec-practical-consideration},
we take some practical issues into consideration and improves the trivial FIST structure.
Section \ref{sec-implementation} presents the implementation of FIST on a commercial router.
Section \ref{sec-evaluate} provides evaluation details and results. In Section \ref{sec-discuss}
and \ref{sec-related-work}, we discuss the scalability of FIST and introduce the related works.
Finally, we present our conclusions in Section \ref{sec-conclusion}.

\section{FIST Structure and Lookup}\label{sec-fist-structure}

\subsection{The TwoD matching rule}
We first present the definition of the forwarding rules that is used in
two dimensional routing. Let $d$ and $s$ denote the destination and source
addresses, $p_d$ and $p_s$ denote the destination and source prefixes.
Let $a$ denote an action, more specifically, the next hop. The storage
structure should have entries of 3-tuple $(p_d, p_s, a)$.

\begin{Definition} \label{def-match} TwoD matching rule:
Assume a packet with $s$ and $d$ arrives at a router. The destination address
$d$ should first match $p_d$ according to the LMF rule. The source address $s$ should
then match $p_s$ according to the LMF rule among all the 3-tuple given that $p_d$ is
matched. The packet is then forwarded to next hop $a$.
\end{Definition}

Our rule is defined based on the following principles: 1) Avoid confliction:
it has been shown \cite{Lu05} that if matching the source and the destination
address with the same priority, the LMF rule cannot decide the priority.
Even using the first-matching-rule-in-table tie breaker may result in loops and
resolving the confliction is NP-hard. 2) Compatibility: Matching destination
prefixes first emphasizes on connectivity and is compatible with previous
destination-based architecture. More specifically, if no source prefix is
involved, our rule naturally regresses to traditional forwarding rules.
Note that our router design is symmetric if source prefix is
matched first.

\begin{figure}[ht]
  \includegraphics[width=3.3in]{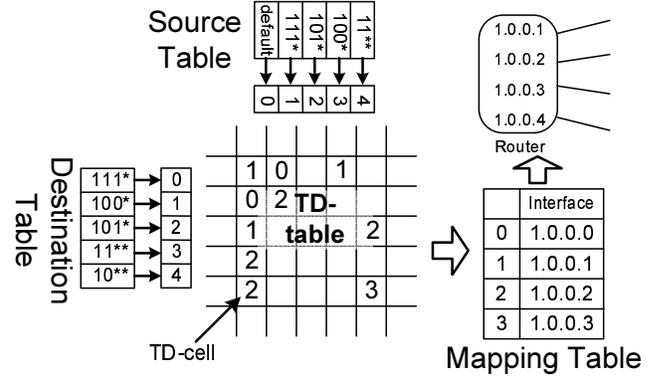}
  \centering
  \caption{\footnotesize{FIST: A forwarding table structure for TwoD-IP}}\label{fig-storage-structure-twod-ip}
\end{figure}

\subsection{FIST Design Details}

\subsubsection{FIST basics}

The new structure FIST is made up of two tables stored in TCAMs and two tables stored in SRAM (see Fig.
\ref{fig-storage-structure-twod-ip}). One table in TCAM stores the destination prefixes (we call it
{\em destination table} thereafter), and the other table in TCAM stores the source
prefixes (we call it {\em source table} thereafter). One table in SRAM is a two dimensional
table that stores the indexed next hop of each rule in TwoD-IP (we call it {\em TD-table} thereafter)
and we call each cell in the array {\em TD-cell} (or in short {\em cell} if no ambiguity).
Another table in SRAM stores the mapping relation of index values and next hops (we call it
{\em mapping-table} thereafter).

For each rule $(p_d, p_s, a)$, $p_d$ is stored in the destination table, $p_s$ is stored in the
source table. For the $(p_d, p_s)$ cell in the TD-table, there stores an index value. From this
index value, $a$ is stored in the corresponding position of mapping table. We store the index
value rather than the next hop $a$ in the TD-table, because the next hop information is much longer.

As an example, in Fig. \ref{fig-storage-structure-twod-ip}, for $(100*,111*,1.0.0.2)$, $100*$ is stored
in the destination table and is associated with the $1^{st}$ row; and $111*$ is stored in the source
table and is associated with the $1^{st}$ column. In the TD-table, the cell $(100*, 111*)$ that
corresponding to $1^{st}$ column and $1^{st}$ row has index value 2. In the mapping table, the next
hop that is related with index value 2 is $1.0.0.2$.

\ignore{
To provide better connectivity, each destination prefix is associated with one or more default next hops. If no source
prefix matches the source address of a packet, then routers will forward the packet to the default next hop associated
with the matched destination prefix. The default next hop can be seen as a string composed of wildcards (**** in
Fig. \ref{fig-storage-structure-twod-ip}). Thus, for any arrived packet, there will be at least one source prefix that
matches its source address.
}

\begin{Theorem}\label{theorem-storage}
The TCAM storage space of FIST is $O(N+M)$ bits. The SRAM storage space of FIST is $O(N\times M \times log(P))$ bits, where $P$ is the size of the mapping table.
\end{Theorem}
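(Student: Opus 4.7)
The plan is to bound the TCAM and SRAM contributions separately by directly counting entries in each of the four tables described in the FIST design, treating the address width $W$ and the next-hop field length as constants.

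For the TCAM bound, I would observe that the destination table stores each destination prefix exactly once (there are $N$ of them) and, symmetrically, the source table stores each source prefix exactly once (there are $M$). Each entry has width equal to a single address rather than a concatenation, which is $O(1)$, so the two TCAMs together occupy $O(N)+O(M)=O(N+M)$ bits. The key point to make explicit here, in contrast to the ACL-like structure from Figure~\ref{tab-acl-like-table}, is that no prefix is duplicated across rules: the separation into two TCAM tables is what eliminates both the concatenation (halving the width) and the multiplicity (removing the $O(NM)$ blow-up in the number of TCAM entries).

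For the SRAM bound, I would consider the TD-table and the mapping table in turn. The TD-table is a two-dimensional array with exactly $N \times M$ cells, one for each (destination-row, source-column) pair, and each cell stores an index into a mapping table of size $P$, requiring $\lceil \log P \rceil$ bits. This gives $O(NM\log P)$ bits for the TD-table. The mapping table itself stores at most $P$ next-hop values, each of constant length, for $O(P)$ bits; since $P\le NM$ (each distinct index must be witnessed by some cell), this term is absorbed into the TD-table contribution, and the total SRAM usage is $O(NM\log P)$ as claimed.

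The main obstacle is less a mathematical difficulty than a modeling sanity check: I would need to verify that the FIST layout does not secretly require additional per-cell auxiliary data beyond the stored index (for instance, no back-pointers for updates are maintained in the TD-table itself, only in the colored-tree structure introduced later), and that address widths are being treated as constants in the stated asymptotics. Once these conventions are pinned down, the theorem follows immediately from the per-table counts above.
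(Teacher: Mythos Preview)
Your proposal is correct and follows essentially the same decomposition as the paper: count TCAM entries in the two prefix tables to get $O(N+M)$, then count the $N\times M$ cells of $\log P$ bits each in the TD-table to get $O(NM\log P)$. The only cosmetic difference is in dispatching the mapping table: you absorb it via $P\le NM$, whereas the paper treats $P$ as a hardware constant bounded by the number of router interfaces; both arguments suffice.
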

\begin{proof}
Because the destination table has $N$ entries, and the source table has $M$ entries, TCAM space is $O(N+M)$ bits. Mapping-table stores the mapping relations between the index and the corresponding next hop interface. They indeed have an upper on the size because each router has a bound on the number of next hop interfaces. Let $P$ denote the number of interfaces of a router and $W$ represent the size of the entries associated with the interfaces. Then the size of mapping-table is less than $PW$. The size of mapping-table can be treated as a constant compared with the SRAM storage space. Therefore, we mainly consider the TD-table size in calculating the SRAM storage space. TD-table dominates the space of SRAM, and has $O(N\times M)$ cells of $log(P)$ bits.
\end{proof}

From Theorem \ref{theorem-storage}, we can see that FIST move the `multiplication' to SRAM, rather than eliminate
it. Such movement is worthwhile considering the following facts: 1) Capacity of TCAM is much smaller
than that of SRAM; 2) TCAM is 10-100 times more expensive than SRAM; 3) TCAM consumes $>100$
times more power than SRAM \cite{liu10}\cite{Chiba10}\cite{router-fib}. Besides, SRAM is more flexible
than TCAM, thus reducing redundancies is more easily.

\subsubsection{TD-cell Saturation}

For the example in Fig. \ref{fig-storage-structure-twod-ip}, if a packet with destination address 1011
and source address 1111 arrives at the router, rule $(101*,11**,1.0.0.2)$ should be matched. This is because according
to LMF rule, the destination prefix 101* should be first matched. There are two rules (including the default rule)
associated with the destination prefix 101*. Consequently, source prefix 11** will be matched.
With the new structure, destination prefix 101* will be matched and source prefix 111* will be matched.
However, the cell $(101*,111*)$ ($2^{nd}$ row and $1^{st}$ column) in TD-Table does
not have any index value.
Intrinsically, consider a packet that should match destination and source prefix pairs $(p_d, p_s)$. If there exists a
source prefix $p_s'$ that is longer than $p_s$, cell $(p_d, p_s')$ rather than $(p_d, p_s)$ will be matched.

To address the problem, we pre-compute and fill the conflicted cells, e.g., $(101*, 111*)$,
with appropriate index value. The algorithm is as follows.

\begin{algorithm}[!h]
\footnotesize{
\Begin{
\ForEach{$p_d, p_s$}{
\If{$\not\exists (p_d, p_s, a)\in \mathcal{R}$}
{
$\mathcal{S}=\{(\bar{p_s},\bar{p_d},\bar{a})\in \mathcal{R}| \bar{p_d}=p_d\}$\label{line-s}\;
$\mathcal{S}' = \{(\tilde{p_s},\tilde{p_d},\tilde{a})\in \mathcal{S}| \tilde{p_s}$ is a prefix of $p_s\}$\label{line-s1}\;
Find $(\hat{p_s},\hat{p_d},\hat{a})\in \mathcal{S}', \forall (p_d', p_s', a')\in \mathcal{S}', p_s'$ is a prefix of $\hat{p_s}$\;\label{line-longest}
Fill the cell $(p_d, p_s)$ with $\hat{a}$.
}
}

}
\caption{\footnotesize{TD-Saturation($\mathcal{R}$)}}
}
\label{<alg-td-saturation>}
\end{algorithm}

We show the TD-table after filling up all the conflicted cells in Figure \ref{fig-twod-array-all-cells}.

\begin{Theorem}
FIST (with \textup{TD-Saturation}()) correctly handle the rule defined in Definition \ref{def-match}.
\end{Theorem}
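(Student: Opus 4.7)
The plan is to show that for any packet with destination $d$ and source $s$, the action that FIST returns equals the action prescribed by Definition \ref{def-match}. I will fix notation first: let $p_d^\star$ denote the longest destination prefix matching $d$ among all rules, and let $p_s^\star$ denote the longest source prefix matching $s$ among rules whose destination component equals $p_d^\star$; Definition \ref{def-match} says the packet should be forwarded via the action $a^\star$ of rule $(p_d^\star, p_s^\star, a^\star)$. On the FIST side, let $p_d^{\mathrm{F}}$ and $p_s^{\mathrm{F}}$ be the prefixes returned by the independent LMF lookups in the destination and source TCAM tables, and let $a^{\mathrm{F}}$ be the action reached through the TD-cell $(p_d^{\mathrm{F}}, p_s^{\mathrm{F}})$ and the mapping table. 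The goal is $a^{\mathrm{F}} = a^\star$.

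First I would observe that $p_d^{\mathrm{F}} = p_d^\star$ immediately: the destination TCAM stores exactly the set of distinct destination prefixes appearing in $\mathcal{R}$, and LMF on that set is precisely what Definition \ref{def-match} prescribes in its first step. Then I would split the analysis on whether the cell $(p_d^{\mathrm{F}}, p_s^{\mathrm{F}})$ is ``native'' (there exists a rule $(p_d^{\mathrm{F}}, p_s^{\mathrm{F}}, a) \in \mathcal{R}$) or was filled by \textup{TD-Saturation}. In the native case, $p_s^{\mathrm{F}}$ is a source prefix of $s$ that is paired with $p_d^\star$ in $\mathcal{R}$, so by the LMF clause of Definition \ref{def-match} we have $|p_s^\star| \ge |p_s^{\mathrm{F}}|$; but $p_s^\star$ also matches $s$ overall, and $p_s^{\mathrm{F}}$ is the longest such prefix in the source TCAM, so $|p_s^{\mathrm{F}}| \ge |p_s^\star|$, forcing $p_s^{\mathrm{F}} = p_s^\star$ and hence $a^{\mathrm{F}} = a^\star$.

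The conflicted case is where the argument does real work. Here no rule pairs $p_d^\star$ with $p_s^{\mathrm{F}}$, so \textup{TD-Saturation} fills the cell with the action $\hat a$ of the longest source prefix $\hat p_s$ that (i) is paired with $p_d^\star$ in $\mathcal{R}$ and (ii) is a prefix of $p_s^{\mathrm{F}}$. The key auxiliary fact I would invoke is that any two prefixes both matching the same string $s$ are comparable under the prefix-of relation, so since $p_s^\star$ and $p_s^{\mathrm{F}}$ both match $s$ and $|p_s^\star| \le |p_s^{\mathrm{F}}|$, $p_s^\star$ is a prefix of $p_s^{\mathrm{F}}$. Therefore $(p_s^\star, p_d^\star, a^\star)$ lies in the set $\mathcal{S}'$ of the algorithm, so $|\hat p_s| \ge |p_s^\star|$. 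For the reverse inequality I would argue by contradiction: if $|\hat p_s| > |p_s^\star|$, then $\hat p_s$ is a prefix of $p_s^{\mathrm{F}}$ (hence of $s$) and is paired with $p_d^\star$, which would make $\hat p_s$ a strictly longer match than $p_s^\star$ under Definition \ref{def-match}, contradicting the choice of $p_s^\star$. Thus $\hat p_s = p_s^\star$, $\hat a = a^\star$, and $a^{\mathrm{F}} = a^\star$.

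I expect the main subtlety, rather than a computational obstacle, to be pinning down that comparability lemma and applying it in the right direction: the whole argument pivots on the fact that $p_s^{\mathrm{F}}$, computed without knowledge of $p_d^\star$, is always a descendant of the ``correct'' $p_s^\star$ in the source-prefix trie, which is what makes the saturation recoverable by scanning ancestors of $p_s^{\mathrm{F}}$ within $\mathcal{S}$. Everything else is bookkeeping on LMF.
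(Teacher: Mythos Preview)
Your proof is correct and follows essentially the same route as the paper: both split on whether the matched cell is native or was filled by \textup{TD-Saturation}, and in the conflicted case both argue that the algorithm's $\hat{p_s}$ coincides with the Definition~\ref{def-match} source prefix. Your version is in fact more careful---you make explicit that $p_d^{\mathrm{F}} = p_d^\star$, you justify the native case via a two-sided length comparison rather than by assertion, and you invoke the prefix-comparability observation to show $p_s^\star \in \mathcal{S}'$, a step the paper leaves implicit.
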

\begin{proof}
When a packet arrives, and matches $(p_d, p_s)$ according to FIST. If $\exists (p_d, p_s, a)\in \mathcal{R}$.
Then this cell stores the index value of $a$, which is the right one.

Else according to Algorithm TD-Saturation(), $\mathcal{S}$ contains all rules given $p_d$ is matched. In Line 5,
$\tilde{p_s}$ is a prefix of $p_s$, thus the packet also match the rule $(\tilde{p_s},\tilde{p_d},\tilde{a})\in \mathcal{S}'$.
In line 6, because there does not exist $(p_s',p_d',a')\in\mathcal{S}'$ where $p_s'$ is longer than
$\hat{p_s}$, $\hat{p_s}$ is the longest match among all the rules given $p_d$ is matched. So $(p_d, p_s)$ should be
set to be $\hat{a}$ according to Definition \ref{def-match}.
\end{proof}

\subsubsection{A Non-Homogeneous FIST Structure}
\label{sec-improvements-storage}
We expect that in practice, many destination prefixes only have default next hops.
It is thus wasteful to leave a row for the TD-table. To become more compatible to the current router
structure and further reduce the SRAM space, we divide the forwarding table into two parts. In the
first part each prefix points to a row in TD-table, and in the second part each prefix points
directly to an index value. For example, in Fig. \ref{fig-storage-structure-twod-ip}, destination prefix
11** does not need any specific source prefix, thus it is stored in the second part.

In our implementation, we logically divide the table into two parts by using a indicator bit
to separate them. We illustrate more details in Section \ref{sec-implementation}.

\subsection{FIST Lookup}
\begin{figure}[!h]
\centering
  \includegraphics[width=3.4in]{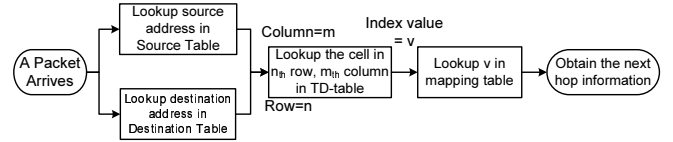}
  \centering
  \caption{\footnotesize{Lookup action in FIST}}
  \label{fig-lookup-action}
\end{figure}

The lookup action $lookup(d, s)$ is shown in Fig. \ref{fig-lookup-action}. When a packet arrives, the router first
extracts the source address $s$ and destination address $d$. Using the LMF rule, the router finds the matched source
and destination prefixes in both source and destination tables that reside in TCAMs. According to the matched entry,
the source table will output a column address and the destination table will output a row address. Combined with the row and
column addresses, the router can find a cell in the TD-table, and return an index value. Using the index value, the router
looks up the mapping table, and returns the next hop information.

\begin{Theorem}
The look up speed of FIST is one TCAM clock cycle plus three SRAM clock cycles.
\end{Theorem}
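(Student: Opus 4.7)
The plan is to simply trace the lookup pipeline described in Figure~\ref{fig-lookup-action} and account for each memory access, using the parallelism between the two TCAM tables and the sequential dependencies among the SRAM reads.

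First I would observe that once a packet arrives, the destination address $d$ and source address $s$ are disjoint fields that can be dispatched in parallel to the destination table and the source table. Since these two tables live in TCAM and are physically independent, a single TCAM clock cycle suffices to issue both LMF matches and to obtain, respectively, a row address and a column address. This gives the ``one TCAM clock cycle'' term.

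Next I would argue that three SRAM clock cycles are needed and unavoidable, by exhibiting three sequential data dependencies. The first SRAM cycle is consumed by forming the TD-cell address from the row and column outputs of the TCAMs (this combination and the corresponding index read is the first SRAM access into the TD-table region). The second SRAM cycle is the actual read of the TD-cell contents, yielding the index value stored by either the original rule or by TD-Saturation(); note that this read cannot begin until the address is latched, so it must follow the first cycle. The third SRAM cycle uses this index value as the address into the mapping-table, returning the next hop; again, this read is data-dependent on the previous output and must be serialized. Summing these steps yields one TCAM cycle plus three SRAM cycles.

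The most delicate part of the argument is distinguishing what is genuinely sequential from what can be pipelined away: one might worry, for example, that combining the row and column address does not really need its own SRAM cycle, or conversely that the TD-table and mapping-table reads could be overlapped. I would resolve this by appealing to the hardware logic described later in Section~\ref{sec-implementation}, where the three SRAM operations correspond to three distinct memory stages along the datapath, and to Theorem~\ref{theorem-storage}, which ensures that the TD-table and mapping-table are stored as separate SRAM structures whose outputs feed one another only through registered index values. The correctness of the returned next hop has already been established by the preceding theorem on TD-Saturation(), so the remaining content here is purely a cycle count, and no further algorithmic reasoning is required.
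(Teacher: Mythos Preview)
Your overall strategy---trace the lookup path of Figure~\ref{fig-lookup-action} and count memory accesses---is exactly the paper's approach, and your treatment of the TCAM stage (parallel LMF in the two tables, one cycle) matches the paper. The problem is your accounting of the three SRAM cycles.

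In the paper's architecture, a TCAM hit does \emph{not} directly output a row or column number. The TCAM produces a match \emph{position}, and an SRAM block associated with the TCAM stores, at that position, the row (respectively column) number. Reading that associated SRAM is the \emph{first} SRAM cycle; the paper's proof states this as ``Getting the row and column address cost one SRAM clock cycle,'' and the pipeline text makes it explicit (``router looks up the SRAM that is used in conjunction with TCAM, to get the row and column''). The second and third SRAM cycles are then the TD-table read and the mapping-table read, exactly as you have them for your cycles~2 and~3.

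Your first cycle, by contrast, is described as ``forming the TD-cell address from the row and column outputs of the TCAMs''---address concatenation is combinational logic, not a memory read, so it cannot justify an SRAM cycle. You then seem to split the TD-table access across two cycles (address latch, then content read), which double-counts one physical read. The fix is simple: identify the first SRAM cycle with the TCAM-associated SRAM that translates match positions to row/column numbers, and drop the artificial split of the TD-table access. With that correction your argument coincides with the paper's. Your appeal to Theorem~\ref{theorem-storage} is also unnecessary here; the separation of TD-table and mapping-table is a structural fact of FIST, not a consequence of the storage bound.
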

\begin{proof}
Source and destination tables can be accessed in parallel. Thus one clock cycle of TCAM is enough. Getting the
row and column address cost one SRAM clock cycle,
Then the router will access TD-table, and mapping table, each cost one SRAM clock cycle.
\end{proof}

As a comparison, the conventional destination-based routing usually
stores destination prefixes in one TCAM, and accesses both
TCAM and SRAM for one time during a lookup process.
Note that the SRAM clock cycle is much smaller than TCAM cycle \cite{Kim09}, and
the bottleneck of a router is normally during delivering
packets through the FIFO, thus two more accesses in SRAM will not
have a significant impact on throughput.

To minimize the additional impact on throughput, we develop a pipeline
lookup process (see model in Fig. \ref{fig-pipeline}). When a packet arrives,
the router first extracts the source and destination addresses, and hands them
to the search engine. The router then looks up the source and destination address
in parallel from the source and destination tables. Note that we can
perform such parallel processing because we have saturated the TD-table.
After the router obtains the SRAM addresses that point to the row and column values,
the SRAM addresses are passed to a FIFO buffer, which resolves the un-matching
clock-rates between TCAM and SRAM. Using the SRAM addresses, router looks up the
SRAM that is used in conjunction with TCAM, to get the row and column. Then router
makes use of the row and column values to lookup the TD-table, and obtains the index
value, which is then used to lookup the mapping table. Finally, router looks up the
mapping table and obtains the next hop information.

\begin{figure}[!ht]
\centering
  \includegraphics[width=3.2in]{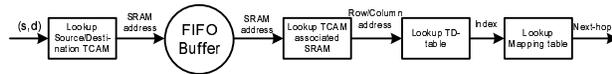}
  \centering
  \caption{\footnotesize{Lookup process pipeline of FIST}}
  \label{fig-pipeline}
\end{figure}

Pipelining itself is not new and almost all routers implement it today.
Using the pipeline, the lookup speed of FIST can achieve one packet per TCAM clock rate.

\begin{Observation}
All implemented with pipelining, the lookup process of the FIST routers is
the same as conventional routers.
\end{Observation}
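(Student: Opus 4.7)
The plan is to argue this as a pipeline throughput claim rather than a latency claim: although the FIST lookup visits more SRAM stages than the conventional lookup, throughput in any pipelined design is dictated by the slowest pipeline stage, and that stage is identical in both architectures.

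First I would enumerate the pipeline stages on each side. For a conventional router the stages are: (i) one TCAM lookup on the destination table, (ii) one SRAM access for the next-hop information. For the FIST router, as described in Fig.~\ref{fig-pipeline}, the stages are: (i) a parallel TCAM lookup on the source and destination tables, (ii) one SRAM access to translate the TCAM outputs into row/column addresses, (iii) one SRAM access to read the TD-table cell for the index value, and (iv) one SRAM access to read the mapping table. I would emphasize that, because of TD-Saturation, stage (i) can be executed with the source and destination TCAMs in parallel, so there is still only a single TCAM stage in the pipeline.

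Next I would invoke the standard pipeline throughput identity: with a FIFO between stages absorbing clock-rate mismatches, the steady-state output rate is $1/\max_i \tau_i$, where $\tau_i$ is the latency of stage~$i$. Appealing to the cited fact that an SRAM cycle is strictly smaller than a TCAM cycle (\cite{Kim09}), the TCAM stage dominates in both pipelines, so $\max_i \tau_i$ equals one TCAM clock cycle in both cases. Hence each router emits one packet per TCAM clock, which is exactly the content of the observation. The FIFO shown in Fig.~\ref{fig-pipeline} is what makes the argument rigorous, since it allows the faster SRAM stages to run independently without stalling the TCAM stage.

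The main obstacle, and the point I would be most careful about, is ensuring that the three sequential SRAM stages in FIST can in fact keep up with a single TCAM stage. I would make this quantitative by noting that if $\tau_{\text{TCAM}} \ge \tau_{\text{SRAM}}$ (which holds in the BitWay 12004 parameters used in the paper), then even three back-to-back SRAM accesses do not form a bottleneck, because each SRAM stage operates on a distinct packet in the pipeline rather than serially on one packet. A secondary point worth mentioning is that the parallel access to the source and destination TCAMs in stage (i) does not require any extra clock cycle beyond a single-TCAM lookup, so the comparison is genuinely stage-for-stage at the bottleneck. Once these two points are in place, the observation follows directly from the pipeline-rate formula.
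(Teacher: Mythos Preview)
Your argument is correct and matches the paper's reasoning: the paper does not give a formal proof of this Observation at all, merely the one-line justification preceding it (``Using the pipeline, the lookup speed of FIST can achieve one packet per TCAM clock rate''), which is precisely the throughput-equals-$1/\tau_{\text{TCAM}}$ claim you spell out. Your version is more explicit about enumerating the stages and invoking the slowest-stage formula, but the underlying idea is identical.
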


\section{Forwarding Table Compression}\label{sec-compression}
Let $P_s$ be the set of source prefixes, $P_d$ be the set of destination prefixes.
Let $f_r(p_d), p_d\in P_d$ (or $f_c(p_s), p_s\in P_s$) be a mapping function that maps
a destination (or source) prefix $p_d$ (or $p_s$) to the $f_r(p_d)$ (or $f_c(p_s)$) row
(or column). Let $TD(x, y)$ denote the cell in the $x_{th}$ row and $y_{th}$ column.
We use the 5-tuple $\{P_d, P_s, f_r(\cdot), f_c(\cdot), TD(\cdot, \cdot)\}$ to denote
a forwarding table.

\begin{Definition}
$\{P_d', P_s', f_r'(\cdot), f_c'(\cdot), TD'(\cdot, \cdot)\}$ is \textbf{equivalent} to
$\{P_d, P_s, f_r(\cdot), f_c(\cdot), TD(\cdot, \cdot)\}$,
for any source address $s$ and destination address $d$, $s$ matches $p_s$ in $P_s$ and $p_s'$ in $P_s'$,
$p_d$ in $P_d$ and $p_d'$ in $P_d'$ according to LMF rule, $TD(f_r(p_d),f_c(p_s))=TD'(f_r'(p_d'),f_c'(p_s'))$
is satisfied.
\end{Definition}

For a given forwarding table, our objective is to find an equivalent forwarding table, which
occupies minimum storage space, including both TCAM and SRAM.

\subsection{Compression in TCAM Space}

We first compress the storage space in TCAM, including destination and source tables. The size of
destination and source tables can be measured by the number of destination and source prefixes in
them.

\begin{Problem}
\textbf{Optimal TCAM Compression:}
For $\{P_d, P_s, f_r(\cdot), f_c(\cdot), TD(\cdot, \cdot)\}$, find an equivalent forwarding table
$\{P_d', P_s', f_r'(\cdot), f_c'(\cdot), TD'(\cdot, \cdot)\}$ such that the storage space in TCAM,
i.e., $|P_s'|+|P_d'|$ is minimized.
\end{Problem}

We develop algorithm \emph{CompTCAM()} to find the optimal TCAM compression. Our algorithm is
based on the ORTC (Optimal Routing Table Constructor) algorithm \cite{Draves99}, that computes
the minimal one dimensional equivalent forwarding table in TCAM.

Intrinsically, our basic idea is to transform the two dimensional table into two conventional one dimensional tables,
one is destination-based and the other is source-based. The action of each prefix in the destination-based (or source-based)
table is the corresponding row (or column) vector in the TD-table. After transforming, the ORTC algorithm can be
applied directly to compress these two tables.

Let $\overrightarrow{TD(f_r(p_d),\cdot)}$ be the row vector related with $p_d$, $\overrightarrow{{TD(\cdot, f_c(p_s))}}$
be the column vector related with $p_s$. Let $\mathcal{DF}(\cdot)$ be a mapping function that maps destination addresses (prefixes)
to row vectors, $\mathcal{SF}(\cdot)$ be a mapping function that maps source addresses (prefixes) to column vectors. Let $ORTC(P,
\mathcal{F}(\cdot))$ be the function by applying ORTC algorithm to the forwarding table, that has prefix set $P$ and action
related with prefix $p\in P$ is $\mathcal{F}(p)$. The input of CompTCAM() is the original forwarding table and
the output of CompTCAM() is the new forwarding table after compression.

\begin{algorithm}[!h]
\footnotesize{
\SetKwInOut{Input}{Initialize}
\SetKwInOut{Output}{Output}
\Output{$\{P_d', P_s', f_r'(\cdot), f_c'(\cdot), TD'(\cdot, \cdot)\}$}
\Begin{
$\forall p_d\in P_d,\mathcal{DF}(p_d)=\overrightarrow{TD(f_r(p_d),\cdot)}$\;
$\{P_d',\mathcal{DF}'(\cdot)\} \gets ORTC(P_d, \mathcal{DF}(\cdot))$\;
$\forall p_d'\in P_d', f'(p_d')\gets f(p_d), \exists p_d\in P_d, \mathcal{DF}'(p_d')=\mathcal{DF}(p_d)$\;

$\forall p_s\in P_s,\mathcal{SF}(p_s)=\overrightarrow{TD(\cdot, f_c(p_s))}$\;
$\{P_s',\mathcal{SF}'(\cdot)\} \gets ORTC(P_s, \mathcal{SF}(\cdot))$\;
$\forall p_s'\in P_s', f'(p_s')\gets f(p_s), \exists p_s\in P_s, \mathcal{SF}'(p_s')=\mathcal{SF}(p_s)$\;
$\forall p_d'\in P_d', p_s'\in P_s', TD'(p_d', p_s')\gets TD(f_r'(p_d'), f_c'(p_s'))$\;
}
\caption{\footnotesize{CompTCAM($P_d, P_s, f_r(\cdot), f_c(\cdot), TD(\cdot, \cdot)$)}}
}
\label{<alg-compression>}
\end{algorithm}

\begin{Theorem}
Algorithm \textup{CompTCAM}() computes the optimal compression, the complexity of \textup{CompTCAM}() is
$w\times N\times M$.
\end{Theorem}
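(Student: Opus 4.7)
The plan is to prove the theorem in two stages: first correctness/optimality of the compression, and then the complexity bound.

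For correctness, I would begin by verifying that \textup{CompTCAM}() produces an equivalent forwarding table. The key observation is that when ORTC is applied to the destination side with actions $\mathcal{DF}(p_d) = \overrightarrow{TD(f_r(p_d),\cdot)}$, its correctness guarantee (from \cite{Draves99}) ensures that for \emph{every} destination address $d$, the row vector returned by the LMF match in $P_d'$ equals the row vector returned by the LMF match in $P_d$. An analogous statement holds for the source side. Combining the two, for any $(d,s)$, if $d$ matches $p_d$ in $P_d$ and $p_d'$ in $P_d'$, and $s$ matches $p_s$ in $P_s$ and $p_s'$ in $P_s'$, then $\overrightarrow{TD(f_r'(p_d'),\cdot)} = \overrightarrow{TD(f_r(p_d),\cdot)}$ and $\overrightarrow{TD(\cdot,f_c'(p_s'))} = \overrightarrow{TD(\cdot,f_c(p_s))}$, so the cell values agree, matching the equivalence definition.

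For optimality, I would argue that the two 1D subproblems decouple. The equivalence condition constrains the destination side purely through the requirement that the row-vector assigned to each destination address be preserved, and similarly for the source side through columns; the choice of $P_d'$ does not constrain the choice of $P_s'$ and vice versa. Hence $\min(|P_d'|+|P_s'|) = \min|P_d'| + \min|P_s'|$, where each minimum ranges over equivalent forwarding tables. Since ORTC is known to compute the provably minimum prefix set for a 1D LMF table with a fixed action-per-address function, applying it independently on each side achieves the joint minimum. The main subtlety here is that a hypothetical equivalent table $\{P_d',P_s',f_r',f_c',TD'\}$ is allowed to permute rows and columns and even change $TD'$ arbitrarily as long as the cell that is matched per $(d,s)$ is preserved; I would handle this by observing that the ``action-per-address'' induced on the destination side is invariant under any such relabeling, so ORTC's lower bound applies to $|P_d'|$ in every equivalent table (and symmetrically for $|P_s'|$).

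For the complexity, ORTC on a set of $N$ prefixes of width $w$ runs in $O(wN)$ prefix operations; however, in our setting each ``action'' being compared or merged is a row vector of length $M$ (respectively a column vector of length $N$ on the source side). Charging $O(M)$ per comparison in the destination call and $O(N)$ per comparison in the source call, plus the $O(NM)$ cost of rebuilding $TD'$ in the last line, gives total work $O(wNM)$. The hardest part of the proof will be the decomposition/independence argument in the optimality step, because one must carefully rule out clever joint encodings where shrinking $P_s'$ forces $P_d'$ to grow or vice versa; the way I would close this gap is by exhibiting the action-per-address lower bound of ORTC separately on each dimension, which does not depend on the other dimension's prefix set at all.
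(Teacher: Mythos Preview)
Your proposal is correct and follows essentially the same approach as the paper: reduce to two independent one-dimensional ORTC instances (rows as actions on the destination side, columns on the source side), invoke ORTC's correctness for equivalence, its minimality for optimality, and its $O(nw)$ bound with $O(M)$-cost vector comparisons for the complexity. Your treatment of the decoupling $\min(|P_d'|+|P_s'|)=\min|P_d'|+\min|P_s'|$ and of the relabeling subtlety is more explicit than the paper's brief contradiction argument, but the underlying logic is identical.
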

\begin{proof}
For the first part of the theorem, according to ORTC algorithm, $\mathcal{DF}(d)=\mathcal{DF}'(d)$ for any destination
address, $\mathcal{SF}(s)=\mathcal{SF}'(s)$ for any source address, thus the new TwoD-IP forwarding table is equivalent
to the original one. We next prove that CompTCAM() minimizes both destination and source tables by contradiction.
We only give the proof for destination table minimization, proof for source table minimization is similar.

In CompTCAM(), according to ORTC algorithm in \cite{Draves99}, if there exists another compression that produces
$\mathcal{DF}''(\cdot)$ and $P_d''$, and $P_d''$ is smaller than the computed $P_d'$. Then there must be a
destination address $d$ that matches $p_d$, and $\mathcal{DF}''(p_d)\not=\mathcal{DF}'(p_d)$. Thus there must be a
source address $s$, such that $d$, $s$ will match a different index value in the new TD-table.

The complexity of ORTC algorithm is $O(nw)$, where $n$ is the number of rules. In $ORTC(P_d, \mathcal{DF}(\cdot))$ of CompCAM(),
there are $O(N)$ rules, the complexity of the basic comparison operation is $O(M)$. Thus, the complexity of CompTCAM()
is $O(w\times N\times M)$.
\end{proof}

CompTCAM() needs a byte-by-byte comparison between rows and columns. To avoid these wasted comparisons,
we can use \emph{fignerprint}, that is a collision-resistant hash value computed over the rows/columns \cite{Zhu08}.
We use SHA-1 as the collision-resistant hash function, and the collision probability is proved to be much
smaller than hardware error rate \cite{Quinlan02}. Within fingerprints, the complexity of CompTCAM()
can be reduced to be $O(N\times M+(N+M)\times w)$.

Note that CompTCAM() minimizes destination and source tables in TCAM. At the same time, it reduce the size
of TD-table in SRAM, i.e., the row (or column) corresponding to the eliminated destination and source prefixes
will also be eliminated. Next, we try to optimize the storage space in SRAM.

\subsection{Compression in SRAM Space}
Within FIST structure, TD-table and mapping table reside in SRAM storage. Compared to TD-table, mapping
table commonly occupies fixed and small storage space. Thus, we take TD-table as the dominant factor of
SRAM storage.

To be storage efficient, we try to minimize the TD-table. We formulate the problems as following.

\begin{Problem}
\textbf{Optimal TD-table Compression:}
For $\{P_d, P_s, f_r(\cdot), f_c(\cdot), TD(\cdot, \cdot)\}$, find an equivalent forwarding table
such that the storage space in TD-table, i.e., $|\{f_r(p_d)|p_d\in P_d\}|\times|\{f_c(p_s)|p_s\in P_s\}|$ is minimized.
\end{Problem}

\begin{Theorem}
Finding the optimal compressed TD-table is NP-complete.
\end{Theorem}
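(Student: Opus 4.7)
My plan is to prove the theorem by establishing NP-membership and NP-hardness separately.

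For NP-membership, I would use as certificate an equivalent forwarding table $\{P_d', P_s', f_r'(\cdot), f_c'(\cdot), TD'(\cdot,\cdot)\}$; the certificate has polynomial size since the optimum is never larger than the input. Equivalence can be verified in polynomial time because the combined prefix sets $P_d\cup P_d'$ and $P_s\cup P_s'$ partition the destination and source address spaces into polynomially many LMF-equivalence classes, so it suffices to compare next-hop outputs on one representative pair $(d,s)$ from each pair of classes.

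For NP-hardness, I would reduce from a known NP-hard matrix block-partition or biclique-cover problem. Minimizing $|\{f_r(p_d)\mid p_d\in P_d\}|\cdot|\{f_c(p_s)\mid p_s\in P_s\}|$ amounts to writing the TD-table as a compressed $r\times c$ block matrix in which each block carries a single uniform action — precisely the structure captured by minimum biclique cover of a bipartite graph (equivalently minimum Boolean matrix rank). Given an instance $G=(U\cup V, E)$, I would construct an input forwarding table whose TD-table encodes the biadjacency matrix of $G$, placing one destination prefix per vertex of $U$ and one source prefix per vertex of $V$; any equivalent compressed table with product $r\cdot c$ then induces a partition of the matrix into $r\cdot c$ uniform rectangles, from which a biclique cover of related size can be read off. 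Hence minimizing $r\cdot c$ in the compression is at least as hard as the biclique-cover number of $G$.

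The main obstacle will be controlling the LMF semantics, which let the optimizer introduce shorter prefixes that merge addresses across the original rows or columns, and let cells filled by \textup{TD-Saturation}() take any value consistent with the overall forwarding behavior. To remove this extra freedom in the reduction, I would force every prefix in the constructed $P_d$ and $P_s$ to have the same maximal length and be pairwise prefix-incomparable; LMF then reduces to exact matching, TD-Saturation is vacuous, and any valid compression must arise from merging rows or columns whose entries already agree on every matched cell. Under this rigidity the optimal product $r\cdot c$ coincides with the minimum block decomposition of the encoded matrix, and the hardness of the source problem transfers cleanly. Combining this reduction with NP-membership yields NP-completeness.
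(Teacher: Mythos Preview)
Your NP-membership argument is reasonable and in fact more explicit than the paper's one-line assertion. The hardness reduction, however, has a real gap. You equate minimizing $r\cdot c$ with minimum biclique cover (Boolean matrix rank), but these are not the same problem. Under your own setup---pairwise prefix-incomparable, maximal-length prefixes, so LMF degenerates to exact matching and TD-Saturation is vacuous---two destination prefixes that share a row index in the compressed table must agree on \emph{every} column: each column lies in some column group, and the corresponding block carries a single value, so the two original rows coincide entrywise. Hence rows mapped together are identical, and the minimum feasible $r$ is simply the number of distinct rows of the encoded matrix; symmetrically for $c$. The optimum product is therefore $(\text{\# distinct rows})\times(\text{\# distinct columns})$, computable in polynomial time, and it does not encode biclique cover, which asks for a covering of the $1$-entries by possibly overlapping all-$1$ submatrices rather than a row/column partition into uniform blocks. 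Your reduction thus fails to transfer the hardness of biclique cover to the stated optimization.

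The paper's route is entirely different. It reduces from the lossless data-compression problem of Storer and its two-dimensional extension: an $n\times m$ input string is embedded as a sub-block $A$ inside a larger $(m+mn)\times(n+mn)$ TD-table, padded with independent sub-blocks $B$, $C$, $D$. The blocks $B$ and $C$ are engineered so that any nontrivial row/column permutation of $A$ forces a permutation of $B$ or $C$ whose increase in compressed size exceeds the at-most-$mn$ gain achievable on $A$; an optimal compression of the whole table must therefore leave $A$ in place, from which one reads off an optimal two-dimensional string compression. Crucially, the cost measure the paper is actually working with here is a general data-compression/deduplication size (the quantity $Opt(\cdot)$ in the proof, and the fixed-block deduplication treated immediately afterward), not the bare product $|\{f_r(p_d)\}|\times|\{f_c(p_s)\}|$; indeed, the very next theorem in the paper shows that this product is minimized in polynomial time by eliminating duplicate rows and columns. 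If you want to salvage a hardness argument you must target that richer compression measure rather than the row/column-count product.
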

\begin{proof}
It is obvious that the decision problem of validating a given TD-table
is solvable in polynomial time. Therefore, the optimal TD-table compression
problem is in NP class. To show this problem is NP-hard, we reduce the lossless
data compression problem, which is known to be NP-complete \cite{Storer82}, to it.

The lossless data compression problem is, given a string, find the minimal-length
compressed form of the string. Here we extend the original problem to the two
dimensional case (we call it two dimensional data compression problem), such that
the input string can be a two dimensional string. Two dimensional data compression
problem is also NP-complete, as one dimensional data compression is a special case of it.
Note that two dimensional data compression problem is not equal to the optimal TD-table
compression problem, as rows (columns) in TD-table can be reordered in TD-table.

\begin{figure}[!h]
\includegraphics[width=2in]{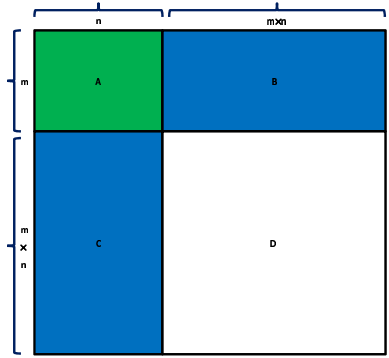}
  \centering
  \caption{\footnotesize{NP-complete proof for the optimal compressed TD-table problem}}\label{fig-optimal-td-np}
\end{figure}

Let $Opt(X)$ be the storage size of the optimal compressed TD-table of TD-table $X$.
Given a $n\times m$ two dimensional string. We construct a $(m+mn)\times(n+mn)$ TD-table,
as shown in Figure \ref{fig-optimal-td-np}. The TD-table is composed of four sub TD-table, $A$, $B$, $C$ and $D$.
Each sub TD-table is independent from the others, i.e., each expressed by a separate symbol
system. $A$ represents the two dimensional string. In $B$, each column $b$ is independent
from other columns, and is the only optimal compressed column, i.e., $Opt(b)>Opt(T\times b)$, where $T$
is a $m\times m$ permutation matrix. In $C$, each row $c$ is independent from other rows, and is
the only optimal compressed row, i.e., $Opt(c)>Opt(T'\times c)$, where $T'$ is a $n\times n$ permutation
matrix. $D$ is an optimal compressed sub TD-table.

Then, we show that by finding an optimal compressed TD-table, we can find a lossless two dimensional
compressed string. This is because if we permutate the sub TD-table $A$, e.g., let the permutation matrix
be $\tilde{T}$, then one of $B$ and $C$ must also be permutated. Without loss of generality, let $B$ be
permutated and the corresponding permutation matrix be $\tilde{T}'$. As $Opt(A)-Opt(\tilde{T}\times A)<m\times n
\le Opt(\tilde{T}'\times B)-Opt(B)$. So permutation on $A$ will never lead to the optimal compressed TD-table.
Thus if we find the optimal compressed TD-table, we can find the lossless compressed data by pick up the
$m\times n$ matrix in the top left corner of the optimal compressed TD-table.

\end{proof}

\subsubsection{Eliminating Duplicated Rows/Columns}

\begin{Observation}
If $\overrightarrow{TD(f_r(p_d), \cdot)}=\overrightarrow{TD(f_r(p_d'), \cdot)}$ (or $\overrightarrow{TD(\cdot,
f_c(p_s))}=\overrightarrow{TD(\cdot, f_c(p_s'))}$), we can merge rows (or columns) of $p_d$ and $p_d'$ (or $p_s$ and
$p_s'$), by setting $f_r(p_d) = f_r(p_d')$ or $f_r(p_d') = f_r(p_d)$ (or $f_c(p_s)=f_c(p_s')$ or $f_c(p_s')=f_c(p_s)$).
\end{Observation}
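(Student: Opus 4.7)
The plan is to verify that the remapping preserves equivalence in the sense of the Definition of equivalent forwarding tables stated earlier. I will treat the row case; the column case is symmetric.

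First, I would fix the new structure obtained by the merge. Nothing in $P_d$, $P_s$, or the source/destination TCAM tables is altered, so the LMF matching on the destination TCAM for any incoming $d$ yields the same prefix (either $p_d$ or $p_d'$, or something else entirely) as before, and likewise for $s$ on the source TCAM. The only modification is that $f_r(p_d)$ is now set equal to $f_r(p_d')$, so the two prefixes index into the same physical row of the TD-table, and one of the two original rows can be dropped from SRAM.

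Next, I would show pointwise equality of the TD-lookup for every $(d,s)$. Let $p_d^\ast$ be the destination prefix matched by $d$ and $p_s^\ast$ the source prefix matched by $s$. If $p_d^\ast\notin\{p_d,p_d'\}$ then $f_r(p_d^\ast)$ is untouched and $TD(f_r(p_d^\ast),f_c(p_s^\ast))$ is identical to the original cell. If $p_d^\ast\in\{p_d,p_d'\}$, then in the original table the returned value is $TD(f_r(p_d),f_c(p_s^\ast))$ or $TD(f_r(p_d'),f_c(p_s^\ast))$, but by hypothesis $\overrightarrow{TD(f_r(p_d),\cdot)}=\overrightarrow{TD(f_r(p_d'),\cdot)}$, so these two values coincide and both equal the value stored in the merged row at column $f_c(p_s^\ast)$. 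Hence the new 5-tuple satisfies $TD'(f_r'(p_d^\ast),f_c'(p_s^\ast))=TD(f_r(p_d^\ast),f_c(p_s^\ast))$ for every $(d,s)$, which is exactly the equivalence criterion.

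The only subtle point — really the only thing beyond bookkeeping — is that one must invoke the TD-Saturation guarantee implicitly: a TD-cell is meaningful for \emph{every} $(p_d,p_s)$ pair (not just those in $\mathcal{R}$), so the row-vector equality hypothesis covers every source address that could arise, including those that match through the saturation fill. Once that is observed, the argument reduces to substitution. The symmetric column statement follows by swapping the roles of $f_r$ and $f_c$, $P_d$ and $P_s$, and rows and columns throughout the argument. The main ``obstacle'' is really just being careful that remapping $f_r$ (rather than deleting a prefix) keeps the TCAM LMF behavior intact, which it does because the destination TCAM entries themselves are untouched.
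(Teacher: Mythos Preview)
Your argument is correct and matches the paper's own justification, which is even terser: the paper simply remarks that FIST reaches index values indirectly via row (column) numbers, so two prefixes whose rows (columns) coincide can be made to share a pointer without changing any lookup outcome. Your case analysis on $p_d^\ast$ and invocation of the equivalence definition make this explicit, but the underlying idea is the same.
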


The observation is true because FIST indirectly points to the index values through row (or column) numbers. If two rows (or
columns) pointed by two destination (or source) prefixes are the same, we can eliminate one of them by making these two
prefixes point to the same row (or column).

Based on this observation, we can eliminate the duplicated rows and columns within FIST structure. The complexity of this process
is $O(log(N)\times N\times M)+O(log(M)\times M\times N) = O(M\times N \times w)$ (because $O(log(N))=O(log(M))=O(w)$).
Within fingerprints, the complexity can be reduced to be $O(M\times N)$. Actually, this process can be combined together
with CompTCAM() to reduce computation time.

\vspace{-0.1cm}
\begin{Theorem}\label{theorem-sram-compress}
	Eliminating the duplicated rows and columns computes the optimal TD-table compression.
\end{Theorem}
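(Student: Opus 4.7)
The plan is to sandwich the optimum between a constructive upper bound achieved by duplicate elimination and a matching combinatorial lower bound that every equivalent table must obey. Let $R^\ast$ denote the number of distinct row vectors $\overrightarrow{TD(f_r(p_d),\cdot)}$ as $p_d$ ranges over $P_d$, and define $C^\ast$ analogously for columns. For the upper bound, I would invoke the Observation immediately preceding the theorem: whenever two destination prefixes have identical row vectors, redirecting $f_r$ so that they share one row index preserves the matched action for every source prefix, and symmetrically for columns. Exhausting all such merges yields an equivalent table in which the images of $f_r$ and $f_c$ have sizes exactly $R^\ast$ and $C^\ast$, so the resulting TD-table occupies $R^\ast \cdot C^\ast$ cells.

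For the lower bound, I would show that any equivalent forwarding table $\{P_d, P_s, f_r', f_c', TD'\}$ (with the prefix sets unchanged, which is the regime of this subsection after \emph{CompTCAM} has been applied) uses at least $R^\ast$ distinct row indices and $C^\ast$ distinct column indices. Suppose $p_d, p_d' \in P_d$ have different original row vectors, so some $p_s \in P_s$ witnesses $TD(f_r(p_d), f_c(p_s)) \ne TD(f_r(p_d'), f_c(p_s))$. Picking destination addresses $d, d'$ in the exclusive LMF regions of $p_d, p_d'$ and a source address $s$ in that of $p_s$, equivalence forces $TD'(f_r'(p_d), f_c'(p_s)) \ne TD'(f_r'(p_d'), f_c'(p_s))$, and hence $f_r'(p_d) \ne f_r'(p_d')$. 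Partitioning $P_d$ into the $R^\ast$ classes of prefixes sharing a common original row vector, different classes must receive different row indices under $f_r'$, so $|\mathrm{image}(f_r')| \ge R^\ast$. The column bound is symmetric, and together they yield $|\mathrm{image}(f_r')| \cdot |\mathrm{image}(f_c')| \ge R^\ast \cdot C^\ast$, matching the upper bound.

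The main obstacle is ensuring that the ``witness'' address pairs needed in the lower bound actually exist, i.e., that every prefix in $P_d$ or $P_s$ has a non-empty exclusive LMF region. I would appeal to the fact that after \emph{CompTCAM} the prefix sets are non-redundant: any prefix whose exclusive LMF region is empty could be removed without changing the matched actions, so optimality of the TCAM compression guarantees every retained prefix covers at least one address uniquely. A secondary subtlety is reconciling the polynomial-time optimum here with the earlier NP-hardness result; the reduction in that proof treats SRAM cost in a more general coding-style sense that allows reshuffling cell contents, whereas the present problem fixes the cost as $|\mathrm{image}(f_r')| \cdot |\mathrm{image}(f_c')|$, and this restriction is precisely what admits the clean combinatorial characterization above.
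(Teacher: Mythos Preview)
Your proposal is correct and follows essentially the same route as the paper: establish equivalence of the deduplicated table, then argue (the paper phrases it as contradiction plus pigeonhole) that any equivalent table with fewer row indices would have to send two destination prefixes with distinct original row vectors to the same row, violating equivalence at a witnessing source prefix, and symmetrically for columns. You are in fact more careful than the paper about the existence of witness addresses in exclusive LMF regions, and your closing remark reconciling this polynomial-time optimum with the preceding NP-hardness claim is a useful clarification that the paper itself leaves unaddressed.
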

\vspace{-0.3cm}

\begin{proof}
	We first prove the equivalence. Without loss of generality,
	we eliminate row first. Let $\{P_d, P_s, f_r'(\cdot), f_c(\cdot), TD'(\cdot, \cdot)\}$
	be the table after eliminating duplicated rows. We have $\overrightarrow{TD(f_r(p_d),\cdot)}=
	\overrightarrow{TD'(f_r'(p_d), \cdot)}$. Let $\{P_d, P_s, f_r'(\cdot), f_c'(\cdot), TD''(\cdot, \cdot)\}$ be the table after eliminating duplicated columns. $TD''$ is a new TD-table. We have $TD'(f_r'(p_d), f_c(p_s)) =
	TD''(f_r'(p_d),f_c'(p_s))$. Thus, $TD''(f_r'(p_d),f_c'(p_s)) = TD(f_r(p_d),f_c(p_s)),
	\forall p_d\in P_d, p_s\in P_s$.
	
	Then we prove the resulted table is the minimum one by contradiction. Assume there exists
	$\{P_d, P_s, \hat{f_r}(\cdot), \hat{f_c}(\cdot), \hat{TD}(\cdot, \cdot)\}$ that is equivalent
	and $|\{\hat{f_r}(p_d)\}|\times|\{\hat{f_c}(p_s)\}|<|\{f_r'(p_d)\}|\times|\{f_c'(p_s)\}|$. Without loss of
	generality, suppose that $|\{\hat{f_r}(p_d)\}|<|\{f_r'(p_d)\}|$. Because $TD''(\cdot, \cdot)$ does not have
	duplicated rows, there must exist $p_d$ and $p_d'$ such that $f_r'(p_d)\not=f_r'(p_d')$ and $\hat{f_r}(p_d)=\hat{f_r}(p_d')$
	(pigeonhole principle). So there must exist $p_s$, such that $TD''(f_r'(p_d),f_c'(p_s))\not=TD''(f_r'(p_d'),f_c'(p_s))$
	and $\hat{TD}(\hat{f_r}(p_d),\hat{f_c}(p_s))=\hat{TD}(\hat{f_r}(p_d'),\hat{f_c}(p_s))$.
	Thus the assumption is wrong. The function of this part is same as rank computation of matrix.
\end{proof}

\subsubsection{Fixed Block Deduplication}

After eliminating the duplicated rows/columns, there still exists duplicated data in TD-table. For example, part of a row
is the same with part of another row. To futher compress the TD-table, we apply fixed block deduplication, which is a
common technique for data deduplication \cite{Meyer12}.

Fixed block deduplication is previously used to eliminate redundancies in data storage (e.g., file system). It breaks file
into chunks that has fixed length, identifies redundant chunk, eliminates all but only one copy, and creates logical pointer
to these chunks so that users can access them as needed \cite{Geer08}.

\begin{figure}[!h]
\includegraphics[width=3in]{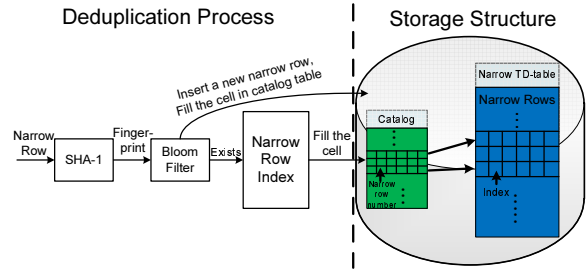}
  \centering
  \caption{\footnotesize{Storage structure and deduplication process for fixed block deduplication}}\label{fig-deduplication-structure}
\end{figure}

\begin{figure}[!h]
\includegraphics[width=3in]{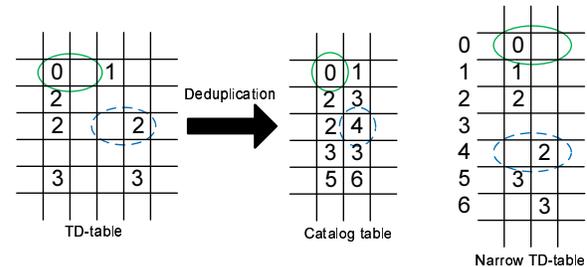}
  \centering
  \caption{\footnotesize{Example of fiexed block deduplication for TD-table}}\label{fig-deduplication-example}
\end{figure}

Our basic idea it to cut the rows in TD-table into fiexed width chunks, called \emph{narrow rows}, i.e., rows that
are shoter than the original rows in TD-table. And we eliminate all duplicated narrow rows, thus only one copy of
each narrow row will be preserved.

As shown in Figure \ref{fig-deduplication-structure}, After deduplicating, we store the narrow rows in a
\emph{narrow TD-table}. Each entry of the narrow TD-table is an indexed next hop, the same with the cell
in TD-table. We also set up \emph{catalog table}, the entry of which points to a row number in narrow TD-table.
Catalog table mapps the narrow rows in TD-table to narrow TD-table. For example, in Figure \ref{fig-deduplication-example},
the TD-table derived from the example in Figure \ref{fig-storage-structure-twod-ip} can be deduplicated into
a narrow TD-table combined with a catalog table. We can see that in Figure \ref{fig-deduplication-example},
the narrow row in the solid cycle can be transformed to be the $0_{th}$ row in the narrow TD-table, and the
narrow row in the dashed cycle can be transformed to be the $4_{th}$ row in the narrow TD-table.

We also show the deduplication process in Figure \ref{fig-deduplication-structure}. We scan the TD-table, and
extract all narrow rows from it. For each narrow row, we first compute the fingerprint of it using SHA-1 function.
With bloom filter \cite{Broder03}, we can judge the narrow row is a duplicated one. If it is not, then we insert
the narrow row into the narrow TD-table, and fill the entry in the catalog table. Else if it is, then we search
in a data structure called \emph{narrow row index}, that organizes all detected <fingerprint, narrow row number>
pairs. Using the search result, we just fill the narrow row number in the corresponding position in the catalog
table.

\ignore{discuss the structure of bloom filter and narrow row index in the implementation section?}

\begin{figure}[!h]
\includegraphics[width=3in]{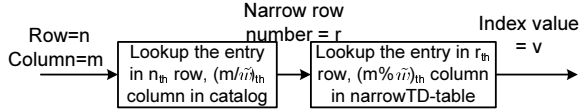}
  \centering
  \caption{\footnotesize{Partial lookup process within the deduplicated storage structure}}\label{fig-deduplication-lookup}
\end{figure}

After using the deduplicated storage structure, the lookup process has to be updated. Let $\tilde{w}$ be the width
(number of cells) of a narrow row. We show the part of the new lookup process in Figure \ref{fig-deduplication-lookup},
which can replace the TD-table lookup step in Figure \ref{fig-lookup-action} and form the whole picture of the new
lookup process. After obtaining the row address and column address, the router can find an entry in the catalog table,
and return a new row address in the narrow TD-table. Using the new row and original column address, the router lookups
the narrow TD-table, and return an index value.
Because of the random access property of the fixed block deplication method, the new lookup process still achieve constant
lookup time. Although it adds one more lookup in SRAM, the influnence on the lookup speed is trivial, especially within
the pipelined lookup model.

\section{FIST Update}\label{sec-update}

Although TD-Saturation() guarantees the correctness of FIST. It needs re-computation
for all conflicted cells, and re-written of them in SRAM when update happens.
Note that although the update is necessary, not all cell need to be cleared
and re-written in this update process. In this section, our objective is
to minimize the number of {\em cell updates}. We use a function $TD(\cdot,\cdot)$ to
denote the TD-table, let $TD(p_d, p_s)$ be the index value of cell $(p_d, p_s)$.

\begin{Problem}
\textbf{Optimal transformation:} Given a TD-table $TD(\cdot,\cdot)$ and an update, find
a new TD-table $TD'(\cdot,\cdot)$, such that $|\{(p_d,p_s)|TD(p_d,p_s)\not=TD'(p_d,p_s)\}|$
is minimized.
\end{Problem}

To achieve this, we will first build a data structure called color tree to organize
the cells. With this color tree, we will develop algorithms for insertion and deletion
where only part of the cells will be updated. We will then prove that our
algorithms indeed minimize the computation cost and the number of cell rewrites.

\subsection{A Color Tree Structure}

Each destination node has a colored tree. The tree is constructed using all source prefixes
in the source table. There are black nodes and white nodes. Intrinsically,
each black nodes represent the cells that are directly {\em set} and white nodes represent
the conflicted cells, i.e., the cells that are not directly {\em set}, but are filled up based
on algorithms, e.g., TD-Saturation().

Let $CT(p_d)$ be the colored tree for $p_d$, let $\mathcal{B}(p_d)=\{p_s|\exists (p_d, p_s, a)\in \mathcal{R}\}$ be the set
of black nodes, let $\mathcal{W}(p_d)=\{p_s|\not\exists (p_d, p_s, a)\in \mathcal{R}\}$ be the set of white nodes. For example,
in Figure \ref{fig-domain-tree}, we show $CT(101*)$, the colored tree for destination prefixes 101*. In it, $\mathcal{B}(101*)=\{****, 11**\}$
and $\mathcal{W}(101*)=\{100*, 101*, 111*\}$.

To compute the optimal transformation upon an update, we first define \emph{domain} of of a black node in colored trees.
Formally,

\begin{Definition}
In a colored tree $CT(p_d)$, the domain of a black node $p_s$ is $\mathcal{D}(p_d, p_s) =\{p_s\}\cup \{p_s'\}$, where $p_s'\in \mathcal{W}(p_d)$
satisfies: 1) $p_s$ is a prefix of $p_s'$; 2) $\not\exists \hat{p_s}\in \mathcal{B}(p_d)$, where $\hat{p_s}$ is a prefix
of $p_s'$ and $p_s$ is a prefix of $\hat{p_s}$.
\end{Definition}

For example, in Figure \ref{fig-domain-tree}, the domain of **** $\mathcal{D}(101*, ****) = \{****, 100*, 101*\}$.
Intuitively, the domain of a black node is the largest sub-tree that roots at itself and does not contain any other
black nodes.

\begin{Theorem}\label{theorem-domain}
When updating rule $(p_d, p_s, a)$, the cell set $\{(p_d, p_s')|p_s'\in \mathcal{D}(p_d, p_s)\}$ is minimum cell set
that should be changed, and all cells in it should be set to be the index value of $a$.
\end{Theorem}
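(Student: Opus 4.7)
The plan is to prove the theorem in two directions: first, that every cell indexed by a member of $\mathcal{D}(p_d, p_s)$ must carry the index of $a$ after the update; second, that no cell outside that set needs rewriting. Both directions reduce to reasoning about which black ancestor in $CT(p_d)$ is the longest-matching source prefix for a given white node, since Algorithm TD-Saturation() fills each white cell with precisely that ancestor's action.

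For the necessity direction, I would fix an arbitrary $p_s' \in \mathcal{D}(p_d, p_s)$. If $p_s' = p_s$ then the cell is the one directly targeted by the update and must be written with the index of $a$. Otherwise $p_s'$ is a white node such that $p_s$ is a prefix of $p_s'$ and no black node strictly separates them, i.e., there is no $\hat{p_s} \in \mathcal{B}(p_d)$ with $p_s$ a proper prefix of $\hat{p_s}$ and $\hat{p_s}$ a prefix of $p_s'$. After the update, the candidate set of black prefixes of $p_s'$ is $(\mathcal{B}(p_d)\cup\{p_s\})$ intersected with the prefixes of $p_s'$, and by the defining property of the domain, $p_s$ is the longest element of this set. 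Algorithm TD-Saturation() therefore sets the cell to the index of $a$.

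For the minimality direction, I would show that any $p_s'' \notin \mathcal{D}(p_d, p_s)$ leaves cell $(p_d, p_s'')$ with the same value pre- and post-update. Two cases suffice. In Case A, $p_s$ is not a prefix of $p_s''$, so the new black node cannot enter the LMF candidate set for $p_s''$ and TD-Saturation() picks the same ancestor as before. In Case B, $p_s$ is a prefix of $p_s''$ but $p_s'' \notin \mathcal{D}(p_d, p_s)$, so by the definition of the domain there exists an intermediate $\hat{p_s} \in \mathcal{B}(p_d)$ with $p_s$ a proper prefix of $\hat{p_s}$ and $\hat{p_s}$ a prefix of $p_s''$. Under LMF the newly inserted $p_s$ loses to $\hat{p_s}$, and the cell value is again unchanged. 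Together these cases give minimality.

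The main obstacle is articulating cleanly the geometric picture: $\mathcal{D}(p_d, p_s)$ is exactly the maximal subtree of $CT(p_d)$ rooted at $p_s$ that contains no other black node, and this subtree coincides with the set of prefixes whose LMF black ancestor is altered by the update. Once this equivalence is pinned down, the two directions become short case analyses on the position of an arbitrary $p_s''$ relative to $p_s$ and the existing black nodes. A minor technical wrinkle worth handling is that the theorem applies uniformly to inserting a brand-new rule and to modifying the action of an existing rule, since in both situations the domain description of the affected cells is the same; only the ``old'' content of the cells differs.
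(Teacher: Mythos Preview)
Your proposal is correct and is actually more complete than the paper's own argument. The paper proves only the necessity direction, and does so tersely by contradiction: it assumes some smaller set suffices, picks a cell $(p_d,\hat{p_s})$ with $\hat{p_s}\in\mathcal{D}(p_d,p_s)$ that was left untouched, and observes that a packet matching $p_d$ and $\hat{p_s}$ should now hit the rule $(p_d,p_s,a)$, so the cell holds a wrong value. Your necessity argument is the same idea phrased directly rather than by contradiction. Where you go further is in the sufficiency direction: your Case~A / Case~B analysis shows that for any $p_s''\notin\mathcal{D}(p_d,p_s)$ the longest black ancestor of $p_s''$ is unchanged by the update, hence the cell value computed by TD-Saturation() is identical before and after. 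The paper simply omits this half and implicitly takes it for granted. Your additional case split costs only a few lines and closes that gap, so it is worth keeping.
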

\begin{proof}
We prove the theorem by contradiction. Assume there exists another cell set is smaller than the above cell set,
indicating that the index value of one cell $(p_d,\hat{p_s})$, where $\hat{p_s} \in |\mathcal{D}(p_d, p_s)$, is
not set to be the index value of $a$. Then if a packet matches $p_d$ and $\hat{p_s}$ within FIST, obviously,
it should match the $(p_d, p_s, a)$ rule. Then the cell is set with a wrong index value.
\end{proof}

From Theorem \ref{theorem-domain}, we can see that, through computing the domain, we can compute the optimal transformation
when an update arrives. In the next subsection, we show two update algorithms that compute the optimal transformation.

\begin{figure}[!h]
\begin{minipage}[t]{0.49\linewidth}
\includegraphics[width=1.5in]{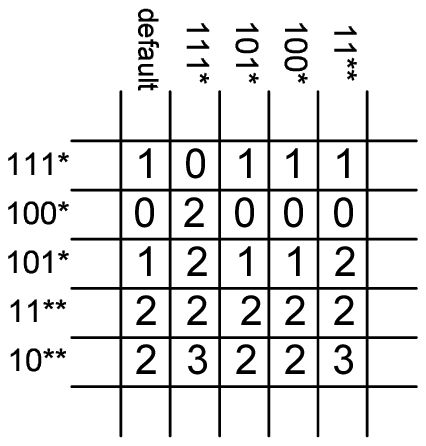}
  \centering
  \caption{\footnotesize{TwoD array after setting all conflicted cells}}\label{fig-twod-array-all-cells}
\end{minipage}
\begin{minipage}[t]{0.49\linewidth}
\includegraphics[width=1.5in]{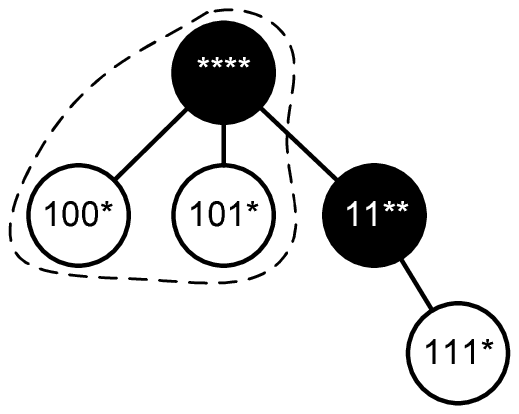}
  \centering
  \caption{\footnotesize{Colored tree $CT(101*)$ for Figure \ref{fig-storage-structure-twod-ip}}}\label{fig-domain-tree}
\end{minipage}
\end{figure}

\subsection{FIST Update Algorithms}\label{sec-update-action}

Here, we define insertion action Insert($p_d, p_s, a$) and deletion action Delete($p_d, p_s$). Update action
Update($p_d$, $p_s$, $a$) can be seen as a deletion followed by an insertion.

Before illustrating these algorithms, we introduce a lemma that simplifies the updating process.

\begin{Lemma}\label{lemma-parent}
If $p_s'$ is the parent of $p_s$ in $CT(p_d)$, and $p_s\in \mathcal{W}(p_d)$, then cells $(p_d, p_s)$ and $(p_d, p_s')$
have the same index value.
\end{Lemma}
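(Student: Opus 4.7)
The plan is to invoke the TD-Saturation specification to identify the index value of each cell with the action of a uniquely determined black source prefix, and then show that this black prefix is the same for $p_s$ and for $p_s'$, so the two cells agree.

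First I would pin down exactly what ``parent in $CT(p_d)$'' means: among all nodes of the colored tree (i.e.\ all source prefixes in $\mathcal{B}(p_d)\cup\mathcal{W}(p_d)$), $p_s'$ is the longest proper prefix of $p_s$. In particular, no source prefix strictly between $p_s'$ and $p_s$ in the bit-prefix order occurs in the tree at all. Next I would recall the consequence of Algorithm TD-Saturation for a white cell: since $p_s\in\mathcal{W}(p_d)$, the cell $(p_d,p_s)$ is filled with the index value of the action of the unique $\hat{p_s}\in\mathcal{B}(p_d)$ that is the longest black prefix of $p_s$. Let me call this witness $\hat{p_s}(p_s)$.

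The core step is to argue $\hat{p_s}(p_s)=\hat{p_s}(p_s')$ (interpreting the right-hand side, when $p_s'\in\mathcal{B}(p_d)$, as $p_s'$ itself, which is the value stored directly in the black cell). I would split on the color of $p_s'$. If $p_s'\in\mathcal{B}(p_d)$, then $p_s'$ is already a black prefix of $p_s$, and by the parent property no black prefix of $p_s$ is longer than $p_s'$, so $\hat{p_s}(p_s)=p_s'$, matching the directly stored value in $(p_d,p_s')$. If $p_s'\in\mathcal{W}(p_d)$, then both $(p_d,p_s)$ and $(p_d,p_s')$ are filled by TD-Saturation. Any black prefix of $p_s'$ is a black prefix of $p_s$, giving $\hat{p_s}(p_s')\preceq \hat{p_s}(p_s)$ in length. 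Conversely, any black prefix of $p_s$ that is strictly longer than $p_s'$ would be a source prefix lying strictly between $p_s'$ and $p_s$ in the prefix order, contradicting that $p_s'$ is the parent of $p_s$ in $CT(p_d)$; hence $\hat{p_s}(p_s)$ is in fact a prefix of $p_s'$, so $\hat{p_s}(p_s)=\hat{p_s}(p_s')$. Either way the two cells carry identical index values.

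The main obstacle, and the only subtle point, is the second subcase: one has to rule out the existence of a black prefix sitting strictly between $p_s'$ and $p_s$. This is exactly where the ``parent in the colored tree'' hypothesis does the work, because the tree contains every source prefix (black or white), so a black interloper would have been placed on the path from $p_s'$ to $p_s$ and would have become the true parent of $p_s$. Once this is spelled out cleanly, the rest is a direct appeal to the TD-Saturation rule and no further computation is needed.
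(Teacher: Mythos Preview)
Your argument is correct and, at its core, follows the same case split as the paper: you branch on whether the parent $p_s'$ is black or white and use the parent relation to exclude any black node strictly between $p_s'$ and $p_s$. The only difference is one of packaging. The paper phrases everything through the \emph{domain} abstraction and then invokes Theorem~\ref{theorem-domain}: if $p_s'\in\mathcal{B}(p_d)$ then $p_s\in\mathcal{D}(p_d,p_s')$, and if $p_s'\in\mathcal{W}(p_d)$ then $p_s$ and $p_s'$ lie in the domain of the same black node; in either case Theorem~\ref{theorem-domain} forces the two cells to share an index value. You instead bypass the domain language and appeal directly to the TD-Saturation rule, identifying for each white node the longest black prefix explicitly. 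Your route is a bit more self-contained (it does not need Theorem~\ref{theorem-domain}), while the paper's route is shorter once the domain machinery is in place; substantively they are the same proof.
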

\begin{proof}
If $p_s'\in \mathcal{W}(p_d)$, then $p_s$ and $p_s'$ belong to the domain of the same black node. If $p_s'\in \mathcal{B}(p_d)$,
then $p_s$ belong to the domain of $p_s'$. Thus according to Theorem \ref{theorem-domain}, the lemma is proved.
\end{proof}

Algorithm Insert($TD(\cdot,\cdot), p_d, p_s, a$) inserts a rule $(p_d, p_s, a)$ given the TD-table. If $p_d$ (or $p_s$) is
not in the destination (or source) table, router should assign an unused row (or column). When a column is
assigned, we should first find $p_s'$, which is the parent of $p_s$ in the source tree $CT(p_d)$. According to Lemma \ref{lemma-parent},
$(p_d, p_s)$ and $(p_d, p_s')$ have the same index value for all $p_d$. Thus, we copy the column corresponding
to $p_s'$ to the column corresponding to $p_s$. After initializing, through computing the domain, we can find the cells that
should be changed. Finally, $p_d$ (or $p_s$) should be inserted into destination (or source) table if it does not exist.

Algorithm Delete($TD(\cdot,\cdot), p_d, p_s$) delete the rule related with $p_d$ and $p_s$ given the TD-table. At first,
a black node $p_s$ in $CT(p_d)$ is set to be white, thus the nodes in domain $\mathcal{D}(p_s)$ now belongs to a new domain.
For example, in Fig. \ref{fig-delete-example}, after deleting rule $(101*,11**,1.0.0.2)$, node $11**$ is set to be white in
colored tree $CT(101*)$. And nodes $11**$ and $111*$, which belong to the domain of $11**$ before deletion, now belong to
the domain of $****$ in $CT(101*)$. Thus cells $(101*,11**)$ and $(101*,111*)$ should be set to be 1, which is the index value
of $(101*,****)$. After deletion, if there does not exist any rule related with $p_d$ (or $p_s$) any more, we should delete it from
destination (or source) table. And then reclaim the row (or column) resources.

\begin{figure}[!h]
\centering
  \includegraphics[width=3.4in]{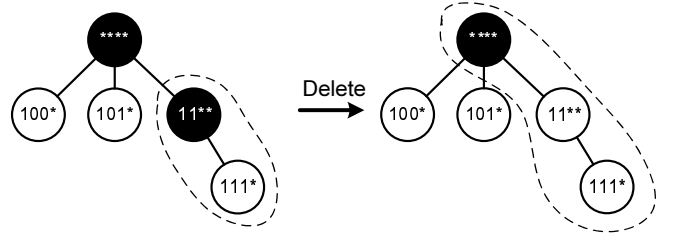}
  \centering
  \caption{\footnotesize{Example of Deletion: }}
  \label{fig-delete-example}
\end{figure}

\begin{algorithm}[!h]
\scriptsize{
\Begin{
\If{$p_d$ does not exist in destination table}
{
Assign a row in TD-table\;
Copy index value of $(p_d, *)$ to cells in the row\;
}
\If{$p_s$ does not exist in source table}
{
Assign a column in TD-table\;
$p_s'\gets$ parent of $p_s$ in $CT(p_d)$\;
copy the column of $p_s'$ to the column of $p_s$\;
}
$TD(p_d, p_s')\gets$ index value of $a$, $\forall p_s'\in \mathcal{D}(p_d, p_s)$\;
\If{$p_s$ ($p_d$) is not in source (destination) table}
{
Insert $p_s$ ($p_d$) into source (destination) table\;}
}
\caption{\footnotesize{Insert($TD(\cdot,\cdot), p_d, p_s, a$)}}
}
\label{<alg-insert>}
\end{algorithm}

\begin{algorithm}[!h]
\scriptsize{
\Begin{
$\tilde{p_s}\gets$ parent of $p_s$ in $CT(p_d)$\;
$TD(p_d, p_s')\gets$ index value of $(p_d, \tilde{p_s})$, $\forall p_s'\in \mathcal{D}(p_d, p_s)$\;
\If{$\not\exists (\hat{p_d},\hat{p_s},\hat{a})\in \mathcal{R}, \hat{p_d}= p_d$}
{
Delete $p_d$ from destination table\;
Reclaim the row of $p_d$\;
}
\If{$\not\exists (\hat{p_d},\hat{p_s},\hat{a})\in \mathcal{R}, \hat{p_s}= p_s$}
{
Delete $p_s$ from source table,
Reclaim the column of $p_s$\;
}
}
\caption{\footnotesize{Delete($TD(\cdot,\cdot), p_d, p_s$)}}
}
\label{<alg-delete>}
\end{algorithm}

\begin{Theorem}
\textup{Insert(}$TD(\cdot,\cdot)$, $p_d$, $p_s$, $a$\textup{)} and \textup{Delete(}$TD$ $(\cdot,\cdot$), $p_d$, $p_s$\textup{)} compute the optimal transformation.
\end{Theorem}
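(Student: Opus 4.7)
The plan is to reduce both claims to Theorem~\ref{theorem-domain} (the domain characterization of minimal cell rewrites) together with Lemma~\ref{lemma-parent} (parent-equality for white nodes). The high-level strategy is: (i) argue that in the ``main'' case, where both $p_d$ and $p_s$ already occupy a row/column in the TD-table, the algorithms touch exactly the cells $\{(p_d,p_s')\mid p_s'\in\mathcal{D}(p_d,p_s)\}$ that Theorem~\ref{theorem-domain} identifies as the minimum change set, and set them to the correct index value; (ii) argue that in the auxiliary cases where a brand-new row or column must be allocated, the ``copy-from-parent'' initialization is forced by Lemma~\ref{lemma-parent} and therefore does not inflate the count of genuine cell rewrites beyond what any correct algorithm would perform.

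For \textbf{Insert}($TD,p_d,p_s,a$), I would first assume the common case in which $p_d\in P_d$ and $p_s\in P_s$ already. Then the algorithm bypasses both \textsf{if}-blocks and executes only the assignment $TD(p_d,p_s')\gets$ index value of $a$ for every $p_s'\in\mathcal{D}(p_d,p_s)$. By Theorem~\ref{theorem-domain}, this is precisely the minimum rewrite set, and the value written is exactly the value that Theorem~\ref{theorem-domain} requires. For the subcase where $p_d$ is new, the newly assigned row has no ``previous'' contents that a correct algorithm could have preserved, so any scheme must write each of its cells at least once; copying the default $(p_d,*)$ index establishes the correct pre-update state on which the subsequent domain rewrite then operates, and this matches the minimal update for the freshly inserted default plus the new rule. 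The same reasoning, but with Lemma~\ref{lemma-parent} supplying the right pre-update value, handles the newly added column: Lemma~\ref{lemma-parent} guarantees that if $p_s'$ is the parent of $p_s$ in $CT(p_d)$, then $(p_d,p_s)$ and $(p_d,p_s')$ must carry equal index values for every $p_d$, so the parent-copy step writes exactly the values any correct post-insertion table must contain. Hence no cell is rewritten unnecessarily.

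For \textbf{Delete}($TD,p_d,p_s$), I would read the deletion as turning the black node $p_s$ in $CT(p_d)$ into a white node, which causes its entire former domain $\mathcal{D}(p_d,p_s)$ to be absorbed into the domain of its parent $\tilde{p_s}$. By Lemma~\ref{lemma-parent} applied to the post-deletion tree, every cell $(p_d,p_s')$ with $p_s'\in\mathcal{D}(p_d,p_s)$ must now carry the index value of $(p_d,\tilde{p_s})$; this is exactly what the first assignment of \textsf{Delete} writes. Applying Theorem~\ref{theorem-domain} in reverse (the domain is also the minimum set whose values change between the old and new tables) shows no smaller rewrite set suffices. The row/column reclamation branches touch only metadata and cells that no longer have to store anything, so they contribute zero unnecessary cell rewrites.

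The main obstacle I foresee is the bookkeeping around the ``new row / new column'' branches of \textsf{Insert}: one has to be careful that the cells filled by the parent-column copy are not double-counted against the optimum, and that optimality is defined with respect to the table one actually needs after insertion (which by Lemma~\ref{lemma-parent} agrees on those cells with the parent column anyway). Once that subtlety is stated precisely --- namely, that the ``post-update'' TD-table is uniquely determined up to the choice of row/column indices and that Lemma~\ref{lemma-parent} pins down the values in any freshly introduced row or column --- both optimality claims collapse into a direct appeal to Theorem~\ref{theorem-domain}.
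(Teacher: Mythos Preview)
Your approach is essentially the same as the paper's: both reduce correctness and minimality to Theorem~\ref{theorem-domain}, and the paper in fact dispatches the theorem in a single sentence as ``an immediate result of Theorem~\ref{theorem-domain}.'' Your treatment is more careful than the paper's in that you explicitly invoke Lemma~\ref{lemma-parent} to justify the parent-copy initialization for newly allocated rows/columns and you account for why those forced writes do not spoil optimality, whereas the paper simply elides those boundary cases.
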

\begin{proof}
The theorem is an immediate result of Theorem \ref{theorem-domain}.
\end{proof}

Thus, the update action causes minimum computation cost, and brings least accesses to TD-table in SRAM.
Beside, with the prevalence of dual-port SRAM, by reading through one port and writing through the
other\footnote{current dual-port SRAM can resolve the read-write collision, i.e., read during write operation
at the same cell \cite{cyclone-handbook}.}, update of TD-table does not have to lock the lookup process.
We can also prove that the update action of FIST is also consistent, i.e., for each rule insertion or deletion,
a packet can only match the rule that would be matched before or after the insertion or deletion \cite{Wang04}.
Due to page limit, we omit it here.

\section{Practical Considerations}\label{sec-practical-consideration}
\subsection{Reducing Update Burden on TD-table}
\label{sec-improvements-update}
Although TD-table update will not influence the lookup process, a single rule insertion/deletion may still cause $O(M)$
write operations at SRAM. An update process, in the worst case, will cause updating on all cells in a row of TD-table.
For example, if we update $(11**,****)$ in Figure \ref{fig-storage-structure-twod-ip} with index value 1, then all cells
in the $3_{rd}$ row should be updated with index value 1.

When $M$ is very large, it may exceed the ability of SRAM to handle these updates. For example, if
there are 10,000 source prefixes, the network produces over 500 updates on the default next hops of different destination
prefixes per second. In the worst case, there will be over 5 million/second write operations into TD-table, which exceed
the maximum clock rate of SRAM.

The main reason for the large number of update operations on TD-table is that, the default next hop of all destination
prefixes is stored as full wildcard in source table. First, the full wildcard resides at the root node of the source tree,
once updated, it will cause a lot of updates subsequently. Second, the default next hop changes frequently, because it
has to change when connectivity information of the corresponding destination prefix changes.

Thus, we propose to isolate default next-hop from source table, i.e., it is not stored in source table. Rather than being
matched when the full wildcard is hit in source table, the default next-hop is matched when none entry in source table is
matched. In Section \ref{sec-implementation}, we will illustrate this in detail.

After removing the full wildcard from the source table, we believe the update frequency of TD-table will be low, with the following
two facts: 1) the update of non-connectivity rule will be slow, i.e., it does not have to respond instantly to the changes
of network topology; 2) most prefixes in the current forwarding tables are near leaf nodes in prefix trees \cite{Basu05}, indicating
that we only need to update a few cells during most rule updates.

After removing the full wildcard, the source tree may be divided into {\em source forest}, which has a similar definition
with source tree except its forest structure. For example, in Fig \ref{fig-source-forest}, we show the source forest after removing
the full wildcard. We can also define {\em colored forest} (denoted by $CF(p_d)$) and re-define domain in a similar way.

However, unlike in the colored tree,
where each node has a black ancestor at least, because the root node is black. In colored forest, a node could have none black
ancestor without black root node. Thus, in colored forest, a white node may do not belong to the domain of any black node.
For example, in Fig \ref{fig-source-forest}, the shaded white node 100* and 101* do not belong to the domain of any black node.
For the white node $p_s$ that do not belong to the domain of any black node in $CF(p_d)$, cell $(p_d,p_s)$ is {\em invalid},
i.e., the cell does not have any index value and should not be matched. Fig. \ref{fig-twod-array-after-isolating} shows
the TD-table after removing the full wildcard from source table.

After that, we can revise update action, including insertion and deletion, by replacing ``tree" with ``forest".

\begin{figure}[!h]
\begin{minipage}[t]{0.49\linewidth}
\centering
  \includegraphics[width=1.5in]{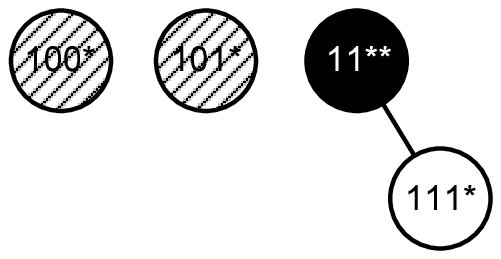}
  \centering
  \caption{\footnotesize{Source forest after removing the full wildcard}}
  \label{fig-source-forest}
\end{minipage}
\begin{minipage}[t]{0.49\linewidth}
\includegraphics[width=1.5in]{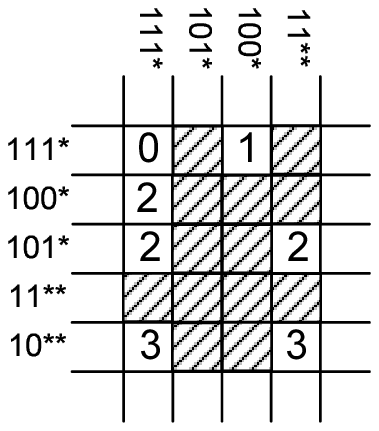}
  \centering
  \caption{\footnotesize{TwoD array after removing the full wildcard in source table}}
  \label{fig-twod-array-after-isolating}
\end{minipage}
\end{figure}

\section{Implementation}\label{sec-implementation}
As a proof-of-concept, we implement the FIST forwarding table structure on a commercial router, Bit-Engine 12004,
which supports four linecards. In each linecard, there are a CPU board (BitWay CPU8240 that works at 100MHz), two
TCAM chips (IDT 74K62100, can accommodate 512K IPv4 entries at most), an FPGA chip (Altera EP1S25-780), and several
cascaded SRAM chips (IDT 71T75602) associated with the TCAM chips. Inside the FPGA chip, there exists internal SRAM
memory.

Our implementation is based on existed hardware, and does not need any new hardware. We re-design the hardware through
rewriting about 1500 lines of VHDL code (not including C code) of the original destination-based version.

\subsection{Router Framework}
In Fig. \ref{fig-router-framework}, we show the framework of our router design. The major changes are in data plane.
In data plane, the FPGA receives the packets from the interface module, extracts the packet head, and request the TCAM
module. Due to resource limit, we place both destination and source table in different blocks of one TCAM, and the FPGA
requests the TCAM module twice (the first in destination table, and the second in source table) to access these two tables.
Although this will increase the delay per lookup in our implementation, many processors (e.g., NetLogic NL10K) now support two lookups
in parallel, thus this will not become the bottleneck of lookup process in the future.

The TCAM module will output the matched prefix, and through the TCAM associated SRAM, FPGA will get the matched result, e.g., row or
column address of matched prefix. Then the FPGA will compute the address of the cell in TD-table, which resides in a block of an
internal SRAM of the FPGA. After getting the index, FPGA accesses the mapping table, which resides in another block of the SRAM of
FPGA. Then FPGA gets the next hop information, and delivers the packets to the next processing module, switch co-process module,
which will switch the packet to the right interface.

We also design the control interfaces for control plane to access and update the forwarding table. In the control plane, we
store destination prefixes in a patricia trie \cite{Knuth98}, source prefixes in another patricia trie. We store the row and column
addresses in the nodes of each patricia trie, and also store each rule in a two dimensional array.

\begin{figure}[!h]
\centering
  \includegraphics[width=2.5in]{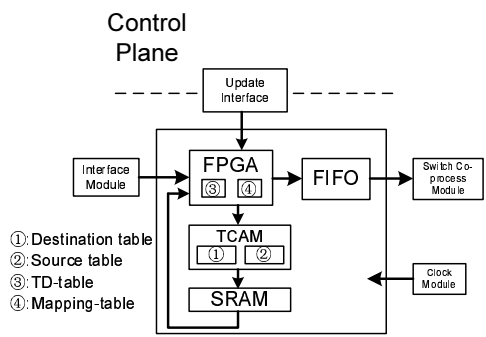}
  \centering
  \caption{\footnotesize{The framework of router design}}
  \label{fig-router-framework}
\end{figure}

\begin{figure}[!h]
\centering
  \includegraphics[width=3in]{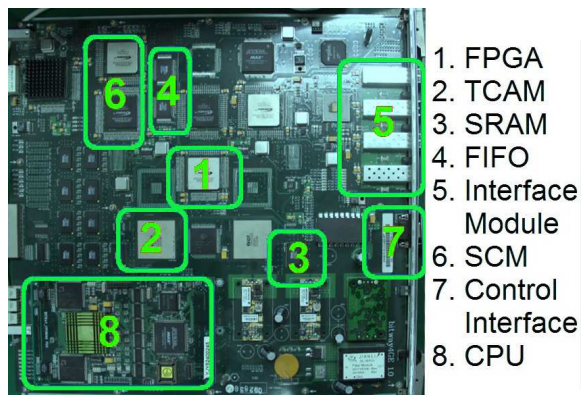}
  \centering
  \caption{\footnotesize{Implementation of FIST on the linecard of Bit-Engine 12004}}
  \label{fig-router-implementation}
\end{figure}

\subsection{A Scalable FIST Design}

We implemented the FIST structure, as shown in Figure \ref{fig-storage-structure-twod-ip}, on the linecard. Besides, for
better scalability, we incorporated the improvements mentioned in Section \ref{sec-improvements-storage} and
\ref{sec-improvements-update}, such that FIST can accommodate more destination/source prefixes and allow more frequent updates.
Within the improvements, the format of the SRAM units pointed by source table remains the same, i.e., storing only the column
address. However, the format of the SRAM units pointed by destination table changes: 1) it has an indicator bit, which is set only
if there is a row in TD-table for the corresponding destination prefix, such that we can reduce the SRAM space of TD-table (see
Section \ref{sec-improvements-storage}); 2) it stores the index value of the default next hop for the corresponding destination
prefix, such that the burden on TD-table caused by updates can be reduced (see Section \ref{sec-improvements-update}).

\begin{figure}[!ht]%
\centering
\subfloat[Format of SRAM units pointed by source table]{
\centering
\includegraphics[width=1.3in]{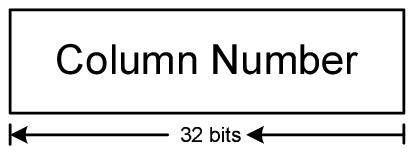}
\label{fig-source-sram-format}%
}
\subfloat[Format of SRAM units pointed by destination table]{
\centering
\includegraphics[width=1.5in]{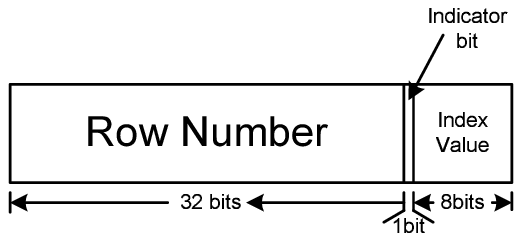}
\label{fig-destination-sram-format}%
}
\caption{\footnotesize{The format of the SRAM units pointed by TCAM entries}}
\end{figure}

Within the new structure, the lookup process also changes. After TCAM matching in destination and source tables and obtaining
the SRAM unit corresponding to the matched prefixes. Router checks the indicator bit, if the indicator bit is unset, then router
gets the index value of the default next hop directly. Else if none source prefix gets matched, then router gets the index value
of the default next hop. Else if a source prefix is matched, then router accesses the cell $(p_d, p_s)$ (assume $p_d, p_s$
are the matched destination and source prefixes) in TD-table. If the cell is invalid, then router gets the index value of
the default next hop, else the router gets the index value of cell $(p_d, p_s)$. Using the obtained index value, router looks
up in mapping table, and gets the next hop information. We show the new lookup process in Figure \ref{fig-new-lookup-process},
note that compared to the original lookup process in Figure \ref{fig-lookup-action}, all new steps are processed in CPU, indicating
that there is none additional accesses to TCAM or SRAM.

\begin{figure*}[!ht]%
\centering
\centering
\includegraphics[width=5in]{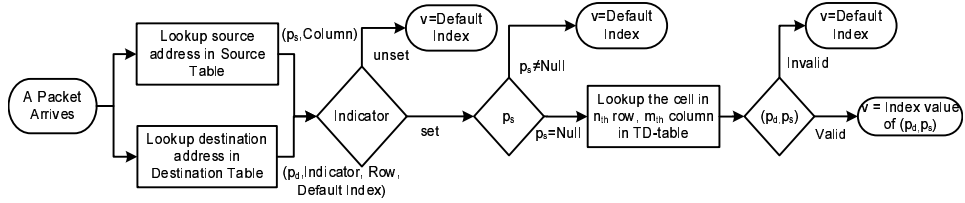}
\caption{\footnotesize{Lookup process of our implementation}}
\label{fig-new-lookup-process}%
\end{figure*}

\subsection{Fixed Block Deduplication}

We use bloom filter to accelerate the deduplication process. Within bloom filter, there exists a summary vector \cite{Zhu08},
which is a vector of $k$ bits. 

\section{Evaluation}\label{sec-evaluate}
\subsection{Evaluation Setup}

\begin{figure}[!h]
  \includegraphics[width=3.3in]{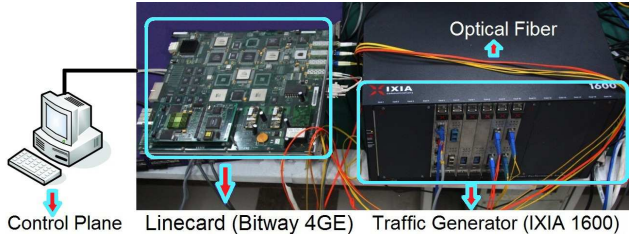}
  \centering
  \caption{\footnotesize{Evaluation environment}}\label{fig-connection-diagram}
\vspace{-0.2cm}
\end{figure}

In Figure \ref{fig-connection-diagram}, we show the connection diagram of the evaluation environment. There are three components,
a PC host (in this paper, the CPU of the PC host is Intel Core2 Duo T6570) that acts as the control plane of a router, a 4GE linecard
that has been equipped with both ACL-like and FIST forwarding table structure and a traffic generator (IXIA 1600). Using
optical fibers, the linecard is connected with the traffic generator, and using a serial cable, the linecard is connected with the
PC host. The traffic generator sends packets of minimum 64 bytes (including 18 bytes of Ethernet Header) at full speeds, i.e., 4Gbps.
The linecard receives the packets, lookups them in the forwarding table, and sends them back to the traffic generator. The traffic
generator can summarize the sending and receiving rate.

We control the forwarding table by the PC host through the serial cable. We update the forwarding table using Algorithm
$Insert(p_d, p_s, a)$ and $Delete(p_d, p_s)$ through the pre-defined interfaces to access hardware on the PC host. We test update
at different frequency, i.e., 100, 1,000, and 10,000 updates per second. The TCAM memory is constructed according to the L-algorithm
\cite{Shah01}, i.e., prefixes of the same length are clustered together and there exists free space between different clusters, to
guarantee fast updates in TCAM. We pre-allocate 1000 positions for each prefix cluster of different length initially.

\subsection{Data Sets}
To evaluate our FIST structure, we consider two scenarios, and generate forwarding table data sets, update sequence data sets
within these scenarios.

\begin{figure}[!h]
  \includegraphics[width=3.3in]{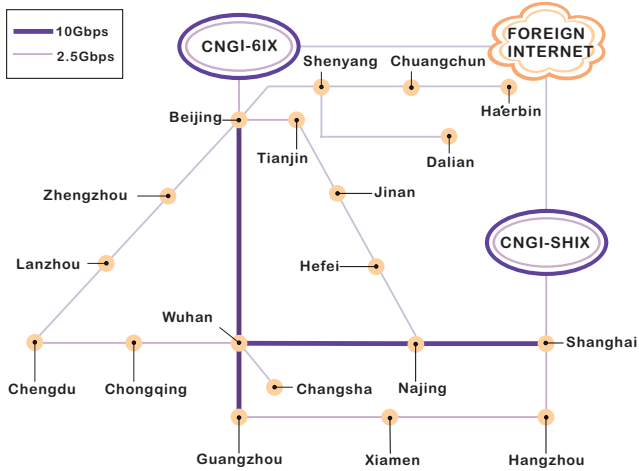}
  \centering
  \caption{\footnotesize{CERNET2 topology}}\label{cernet2_topology}
\vspace{-0.2cm}
\end{figure}

\subsubsection{Policy Routing in CERNET2}
CERNET2 has two international exchange centers connecting to the Internet, Beijing (CNGI-6IX) and Shanghai (CNGI-SHIX).
However, during operation, we found that CNGI-6IX is very congested with an average throughput of 1.18Gbps in February
2011; and CNGI-SHIX is much more spared with a maximal throughput of 8.3Mbps at the same time. We want to move the out-going
International traffic of three universities, i.e., THU (in Beijing, with 38 prefixes), HUST (in Wuhan, with 18 prefixes) and
SCUT (in Guangzhou, with 28 prefixes) to CNGI-SHIX (Shanghai portal).

In this scenario, we collects the prefix and FIB information from CERNET2. There are 6973 prefixes in the FIB of CERNET2, and
among them there are 6406 foreign prefixes. We construct three policy forwarding tables on three routers, i.e., Beijing, Wuhan
and Guangzhou (we call each forwarding table PR-BJ, PR-WH, PR-GZ).

To obtain the update sequence, we set the initial two dimensional rule set to be empty, and add all rules into the forwarding
table at some time point. We generate the update sequence on the router of Wuhan in this way, to simulate a common scenario,
where ISPs decide to carry out a policy at some time point. We show the number of rules in each forwarding table, and number
of updates in each update sequence in Table \subref*{tab-rule-number}.

\begin{table}[h!]
\scriptsize{
\subfloat[Number of rules in each forwarding table]{
  \centering
  \begin{tabular}{c|c}
    \toprule
    Forwarding table  &  \# of Rules \\
    \midrule
    PR-BJ       &  250366    \\
    PR-GZ       &  186306   \\
    PR-WH       &  365674  \\
    LB-MO       &  7118   \\
    LB-AF       &  7342   \\
    LB-NI       &  7410  \\
    \bottomrule
  \end{tabular}
  \label{tab-rule-number}
}
\subfloat[Number of updates in each update sequence]{
  \centering
  \begin{tabular}{c|c}
    \toprule
    Update sequence  &  \# of Updates\\
    \midrule
    PR          & 365674 \\
    LB          & 475773\\
    \bottomrule
  \end{tabular}
  \label{tab-update-number}
}
}
\caption{\footnotesize{Data sets overview}}
\label{tab-datasets-overview}
\end{table}

\subsubsection{Load Balancing in CERNET2}
To further balance the load in between CNGI-6IX and CNGI-SHIX, we need a more dynamic load balancing mechanism in the future.
We collect about one Tera-Bytes of traffic data during one month (Jan, 2012) from three routers (i.e., Beijing, Shanghai and Wuhan)
by NetFlow. In Figure \ref{fig_cernet2_utilization} (the Y-axis has been anonymized), we also show the bandwidth utilization of
both CNGI-6IX and CNGI-SHIX during the month. We can see that CNGI-6IX is much more congested than CNGI-SHIX.

\begin{figure}[!h]
  \includegraphics[width=3in]{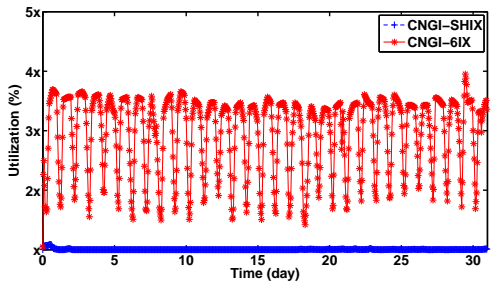}
  \centering
  \caption{\footnotesize{Bandwidth utilization of CNGI-6IX and CNGI-SHIX in CERNET2}}\label{fig_cernet2_utilization}
\vspace{-0.2cm}
\end{figure}

We first process the traffic data, such that each out-going international micro-flow, that is identified by their source and
destination addresses, is aggregated into a macro-flow, that is identified by an source and an destination prefix (here, we
use the LMF rule for aggregation). Then we try to redistribute each macro flow to different exchange centers, such
that load is optimally balanced. The problem can be reduced to Multi-Processor Scheduling problem \cite{Blazewicz86} and is NP-hard.
To solve the problem, we use the greedy first-fit algorithm, which assigns each macro flow to the exchange center with the least
utilization, and achieves an approximation factor of 2.

We construct three load balancing forwarding tables, each at a different time points, i.e., 6:00 morning, 2:00 afternoon and 10:00
evening during Jan 15, 2012 on the router of Wuhan (we call each forwarding table LB-MO, LB-AF, LB-EV). We also show the number of
each forwarding table in Table \ref*{tab-update-number}, in which we can see that LB-EV is the largest one, because more traffic should
be move to CNGI-SHIX when 10:00 at night is the peak traffic point during one day. We also generate the update sequence by computing
a new load balancing scheme every hour.

\subsection{Evaluation Results}

\subsubsection{Forwarding Table Size}
We evaluate the storage space that FIST consumes for all forwarding tables, and the storage space after compression and
after adopting non-homogeneous structure. As a comparison, we also set the ACL-like structure as a benchmark.
In Figure \ref{fig-forwarding-table-size}, we show the size of each forwarding table, which can be separated into TCAM
and SRAM storage, within different storage structures.

\textbf{Trivial FIST and ACL-like Structure:} In Figure \subref*{fig-tcam}, we can see that for all forwarding tables, FIST consumes
only half of the TCAM space that ACL-like structure consumes. For the data sets within the policy routing scenario,
FIST costs much less TCAM storage, e.g., within PR-WH, FIST consumes about 1Mb, while ACL-like structure consumes more
than 72Mb TCAM storage space. This is because in our policy routing scenario, the forwarding table is very \emph{dense}, i.e.,
many rule share the same destination or source prefix. FIST store only once for each destination or source prefix, while
ACL-like structure may store multiple times for the same destination (or source) prefix if it is associated with multiple
source (or destination) prefixes.

In Figure \subref*{fig-sram}, we can see that, within PR-BJ, PR-GZ and PR-WH, FIST consumes less SRAM space than
ACL-like structure. However, within LB-MO, LB-AF and LB-EV, FIST consumes more SRAM space ACL-like
structure. This is because in the policy routing scenario, the forwarding table is much denser, the rules are congregated in a
few source and destination prefixes, i.e., the prefixes of THU, HUST and SCUT. However, in the load balancing scenario, rules
span across many destination and source prefixes, thus the FIST structure consumes much more SRAM space.

\textbf{Compression:}
In Figure \subref*{fig-compression-tcam} and \subref*{fig-compression-sram}, we show the consumed TCAM and SRAM storage
space within FIST after compression, first by Compress-DS() and then by Compress-TD(). We also compress the forwarding
tables within ACL-like structure, i.e., minimize the number of rules. Note that within ACL-like
structure, we can not further reduce SRAM storage space after minimizing the TCAM storage space.
In Figure \subref*{fig-compression-tcam},
we can see that about 20\%-30\% TCAM storage space can be saved through compression. After Compress-DS(), the TCAM storage
space consumed by FIST is still much smaller than the TCAM storage space consumed by ACL-like structure.
Compress-TD() does not effect on TCAM storage, as it only modifies the row (or column) number of destination (or source)
prefixes.
In Figure \subref*{fig-compression-sram}, we show the consumed SRAM storage space after compression. We can see that the
percentage of SRAM that can be saved by Compress-SD() and compressing ACL-like forwarding table is similar with
the percentage of TCAM that can be saved, i.e., 20\%-30\%. However, Compress-TD() has a considerable effect on the SRAM
storage space within FIST, this is because there are high redundancies in the TD-table. For example, on PR-WH, we carry
out the same policy on all source prefixes in source table, thus their corresponding columns in TD-table can be merged.

\textbf{Non-Homogenous Structure:}
In Figure \subref*{fig-non-homo-tcam}, we show the consumed TCAM storage space within non-homogeneous structure.
Non-homogeneous FIST structure does not save TCAM storage space, because non-homogeneous structure only separates
the destination table into two parts. Non-homogeneous ACL-like structure does save TCAM storage space,
especially for the load balancing scenario. This is because the width of an TCAM entry can be reduced after store
destination only rules separately. However, because the width of a TCAM entry is fixed, we can only physically
(instead of logically) divide the table into two parts within ACL-like structure. In contrast, within
FIST, we can flexibility logically divide the table into two parts.

In Figure \subref*{fig-non-homo-sram}, we show the consumed SRAM storage space within non-homogeneous structure.
Within non-homogeneous FIST structure, SRAM storage space can be saved. Within PR-BJ, PR-GZ and PR-WH, about
7\% SRAM space can be saved after adopting non-homogeneous structure, because about 7\% destination prefixes are
not foreign prefixes, and does not have to be moved. However, for LB-MO, LB-AF and LB-EV, the SRAM space can be reduce
to be 3\% of the SRAM space consumed by homogeneous structure. This is because in the load balancing scenario, only
traffic of a small number of destination prefixes have to be diverted to another path. For example, within LB-EV,
only traffic towards 59 destination prefixes has to be diverted. Within non-homogeneous ACL-like structure,
SRAM storage space is not saved. After adopting non-homogeneous structure, FIST costs less SRAM storage than ACL-like structure.

\textbf{Combine Non-Homogenous Structure with Compression:}
In Figure \subref*{fig-final-tcam} and \subref*{fig-final-sram}, we apply both non-homogenous structure and compression
techniques to all forwarding tables. The resulting tables get smaller than all previous tables. Here, we focus on SRAM
storage space because non-homogenous structure has no effect on TCAM storage space.  We can see that the improvement is
small compared to compression only, this is because the TD-table is already very small, and negligible as compared to
other consumed SRAM storage space. However, the improvement is quite large compared to non-homogenous structure only,
because there still exist high redundancies after adopting non-homogenous structure.

\begin{figure*}[!ht]%
\centering
\subfloat[TCAM storage space for FIST and ACL-like sturcture]{
\centering
\includegraphics[width=1.8in]{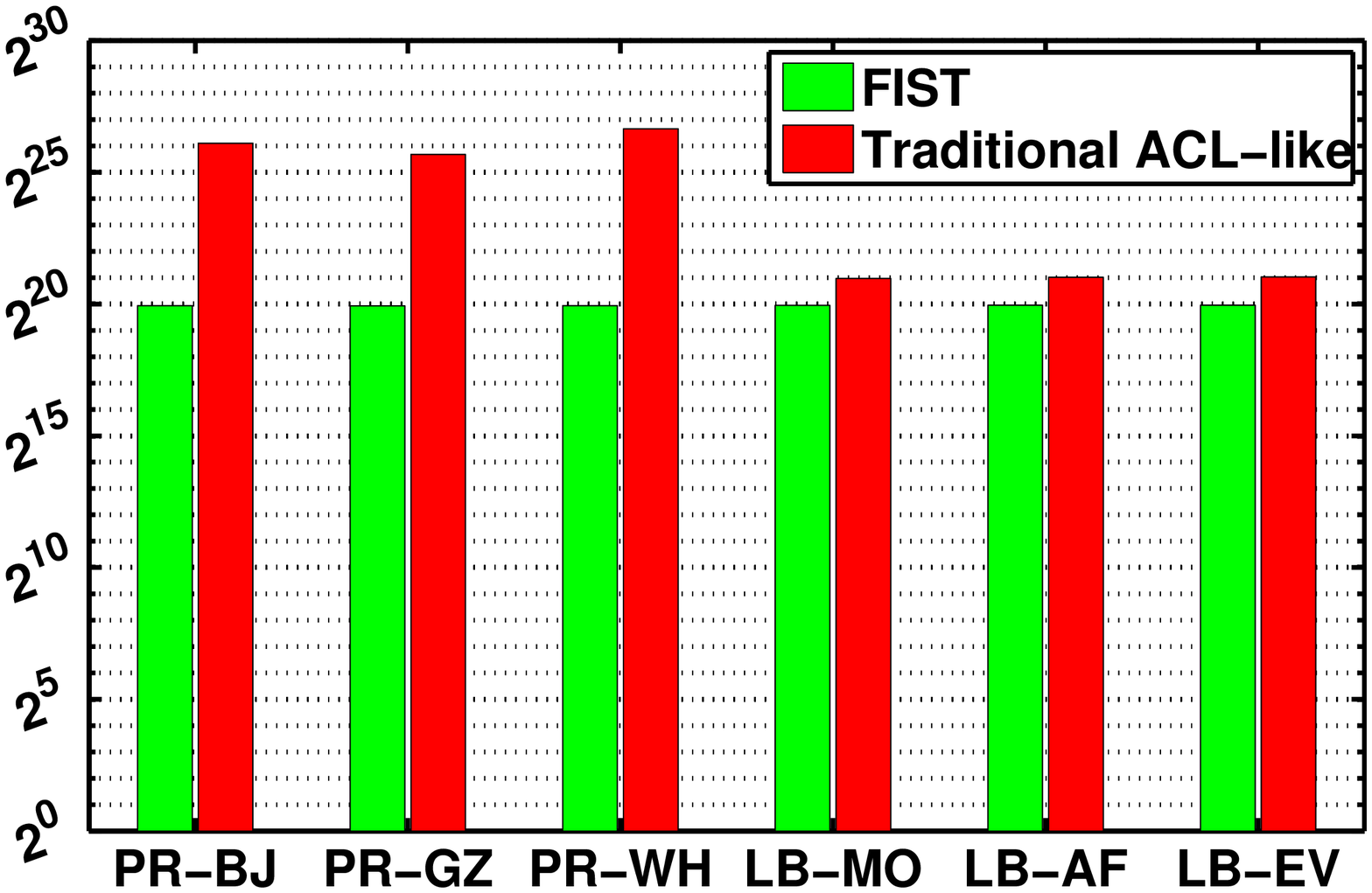}
\label{fig-tcam}%
}
\subfloat[TCAM storage space after compression]{
\centering
\includegraphics[width=1.8in]{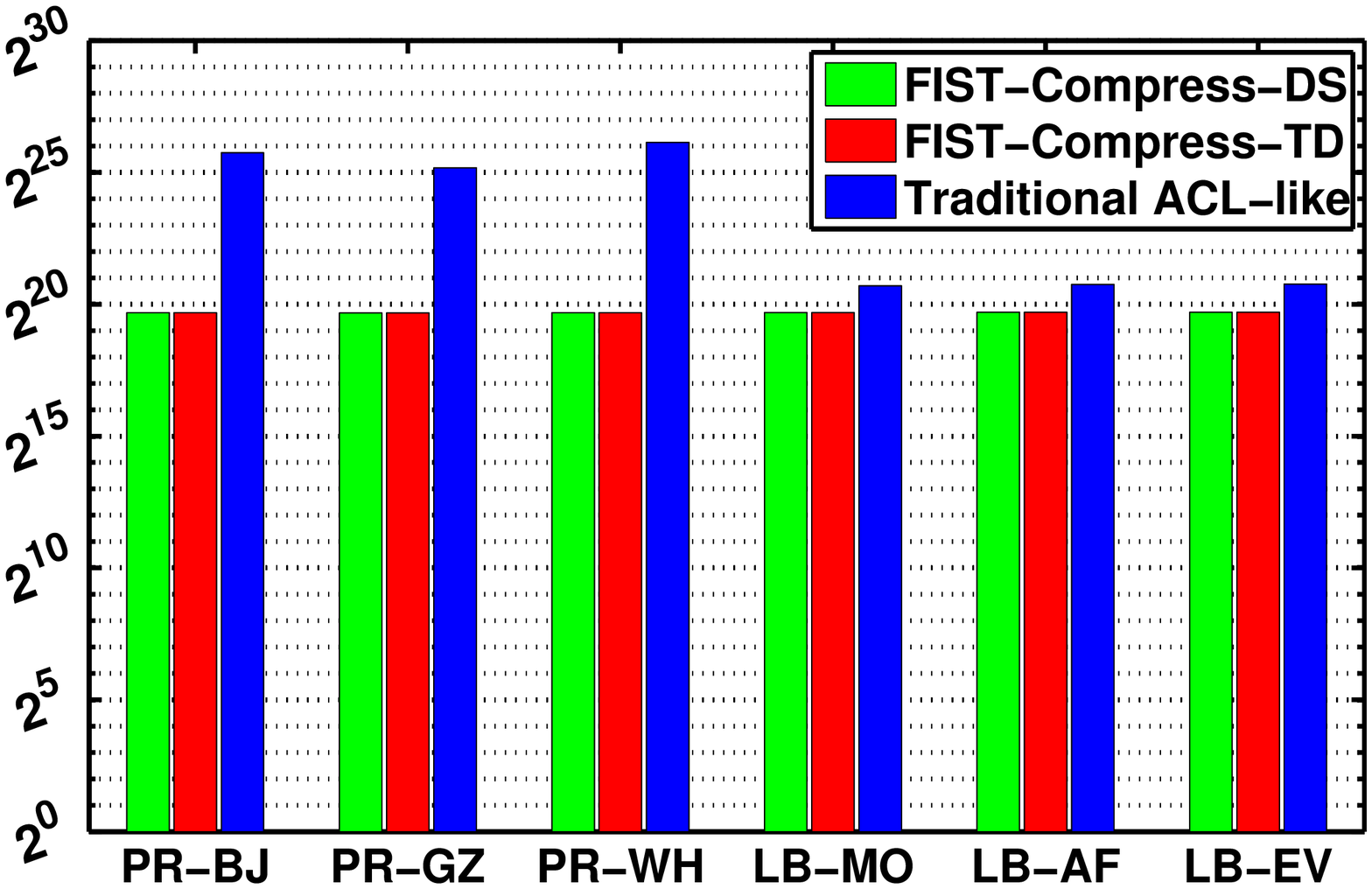}
\label{fig-compression-tcam}
}
\subfloat[TCAM storage space with non-homogeneous structure]{
\centering
\includegraphics[width=1.8in]{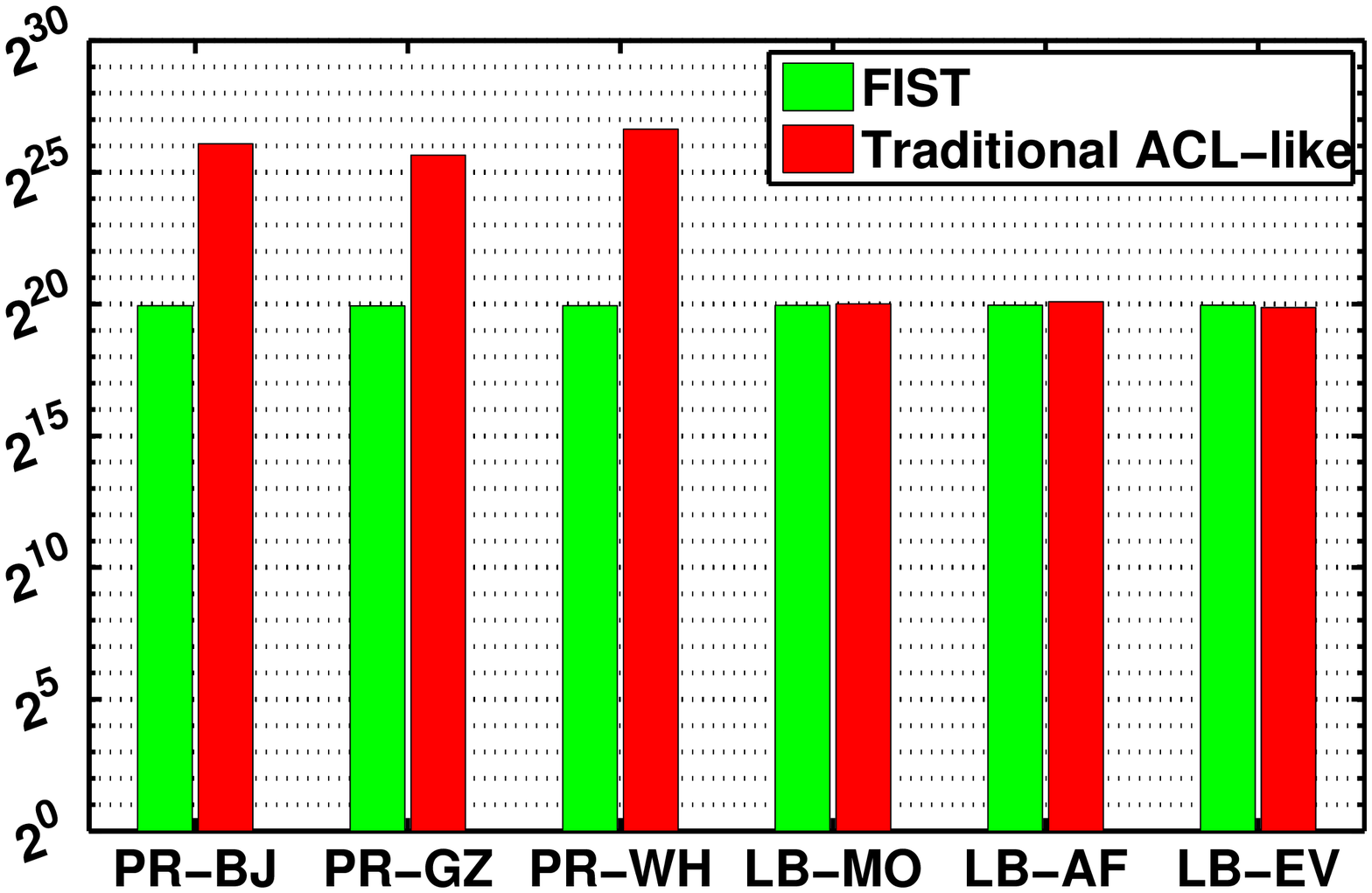}
\label{fig-non-homo-tcam}
}
\subfloat[TCAM storage space after compression, with non-homogeneous structure]{
\centering
\includegraphics[width=1.8in]{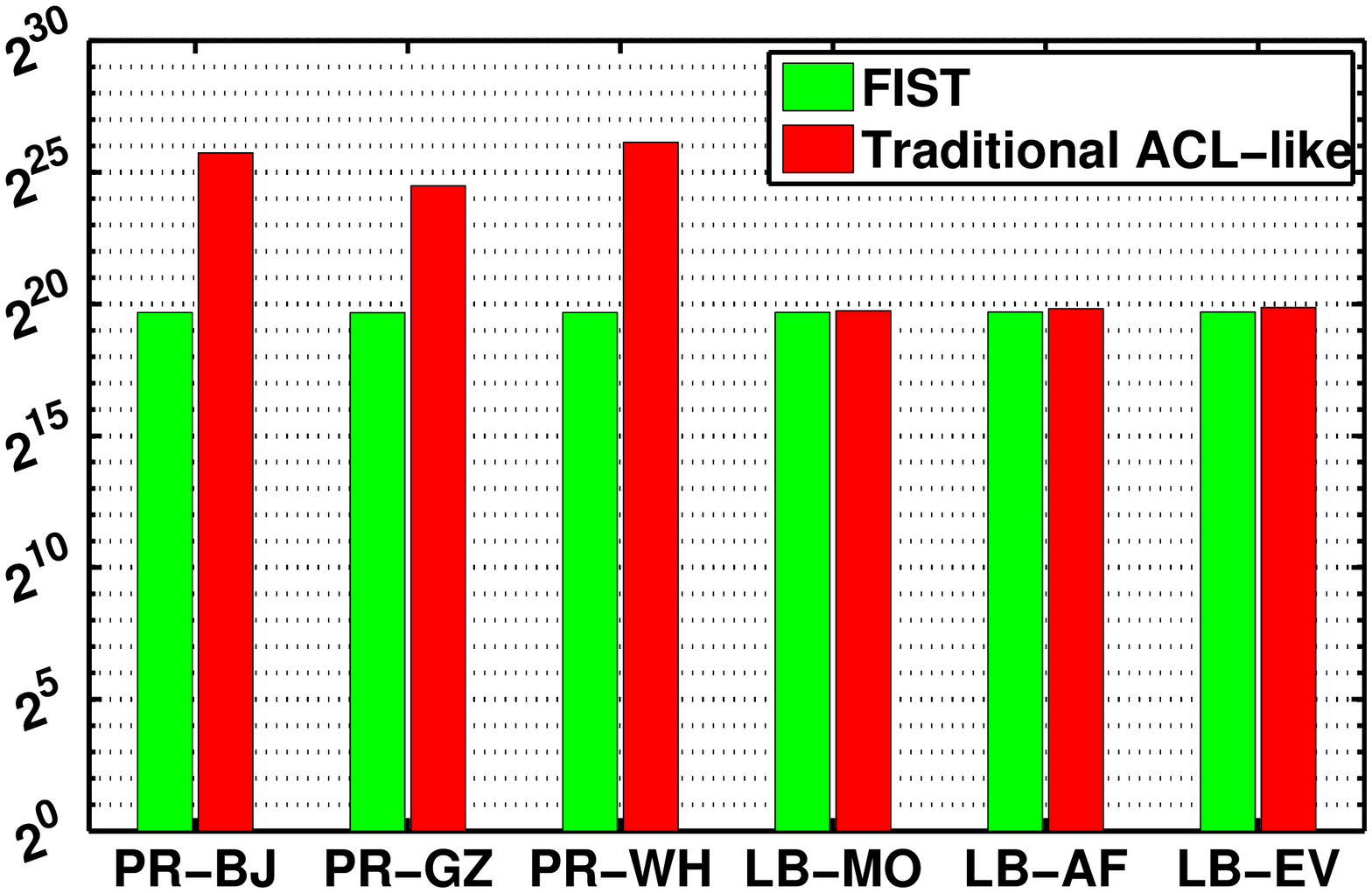}
\label{fig-final-tcam}
}
\\
\subfloat[SRAM storage space for FIST and ACL-like sturcture]{
\centering
\includegraphics[width=1.8in]{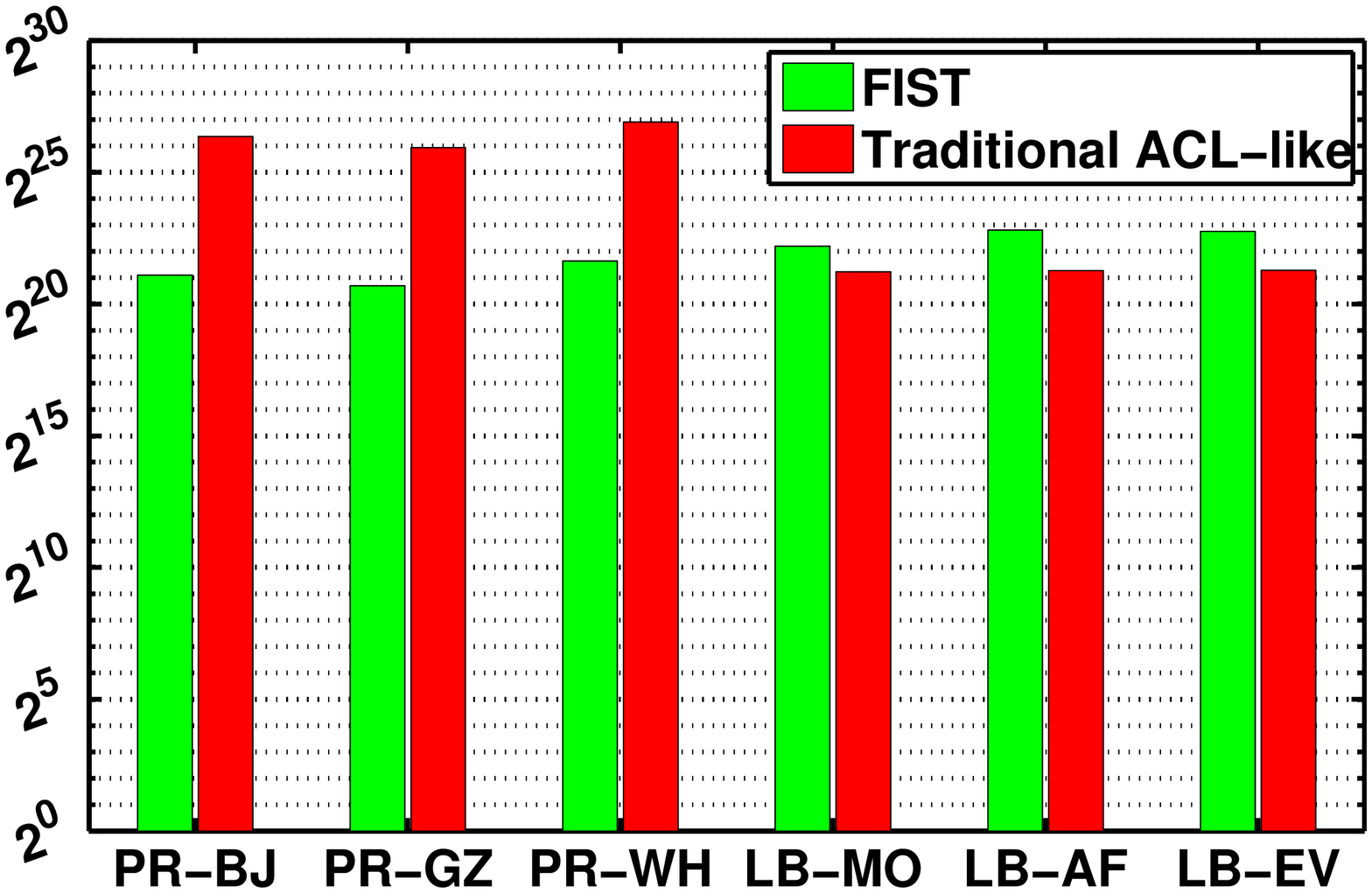}
\label{fig-sram}%
}
\subfloat[SRAM storage space after compression]{
\centering
\includegraphics[width=1.8in]{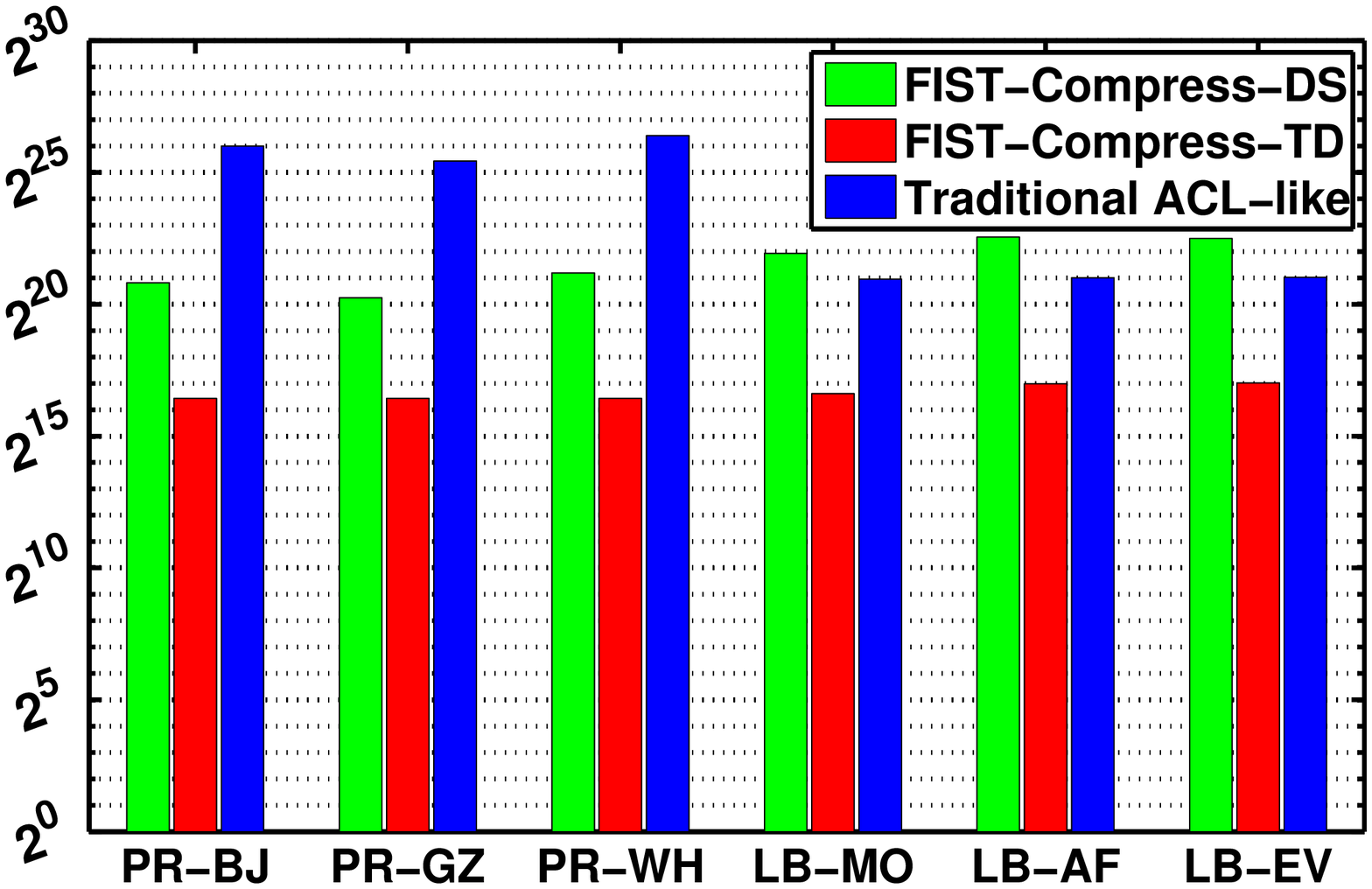}
\label{fig-compression-sram}
}
\subfloat[SRAM storage space with non-homogeneous structure]{
\centering
\includegraphics[width=1.8in]{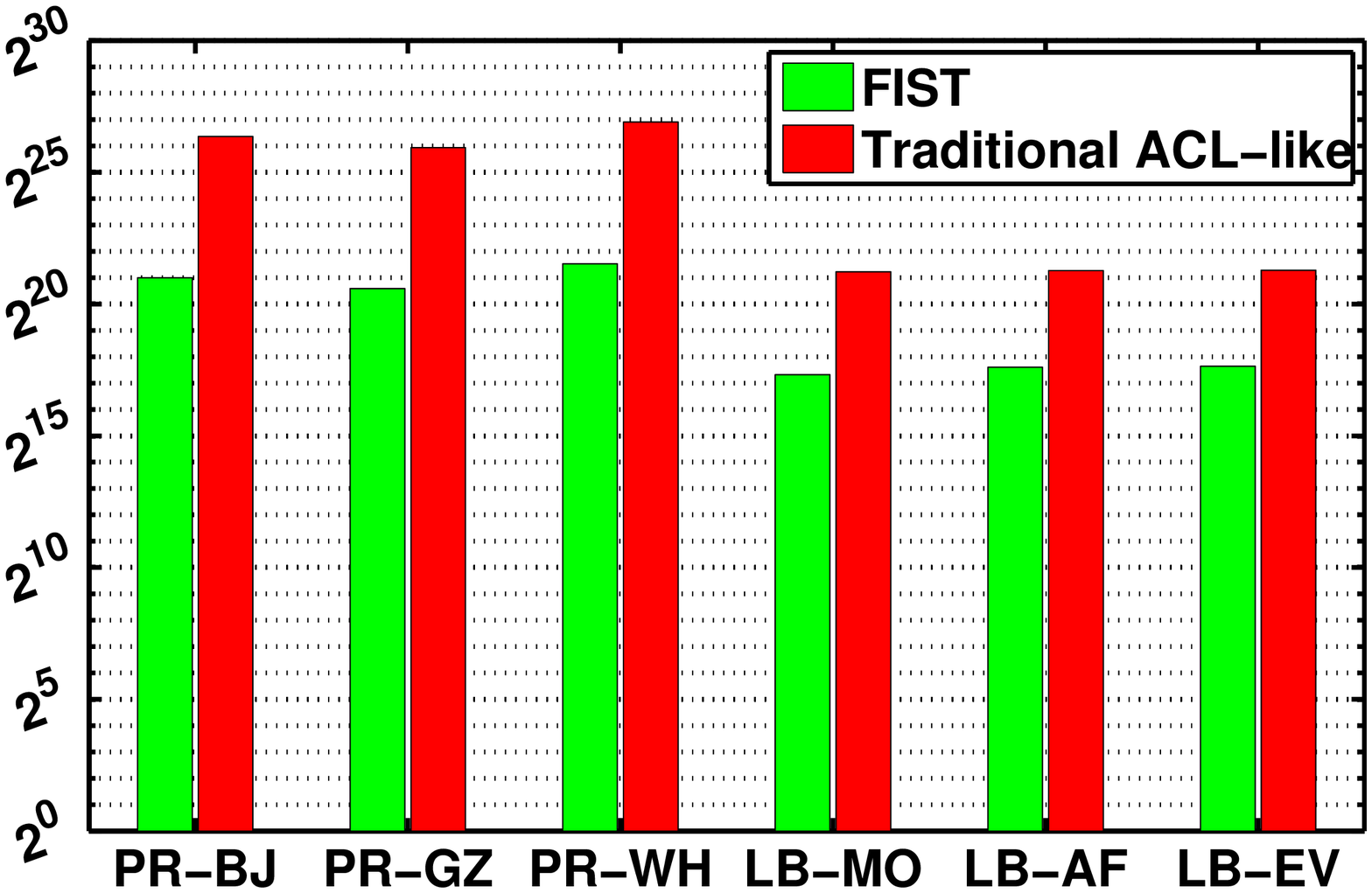}
\label{fig-non-homo-sram}
}
\subfloat[SRAM storage space after compression, with non-homogeneous structure]{
\centering
\includegraphics[width=1.8in]{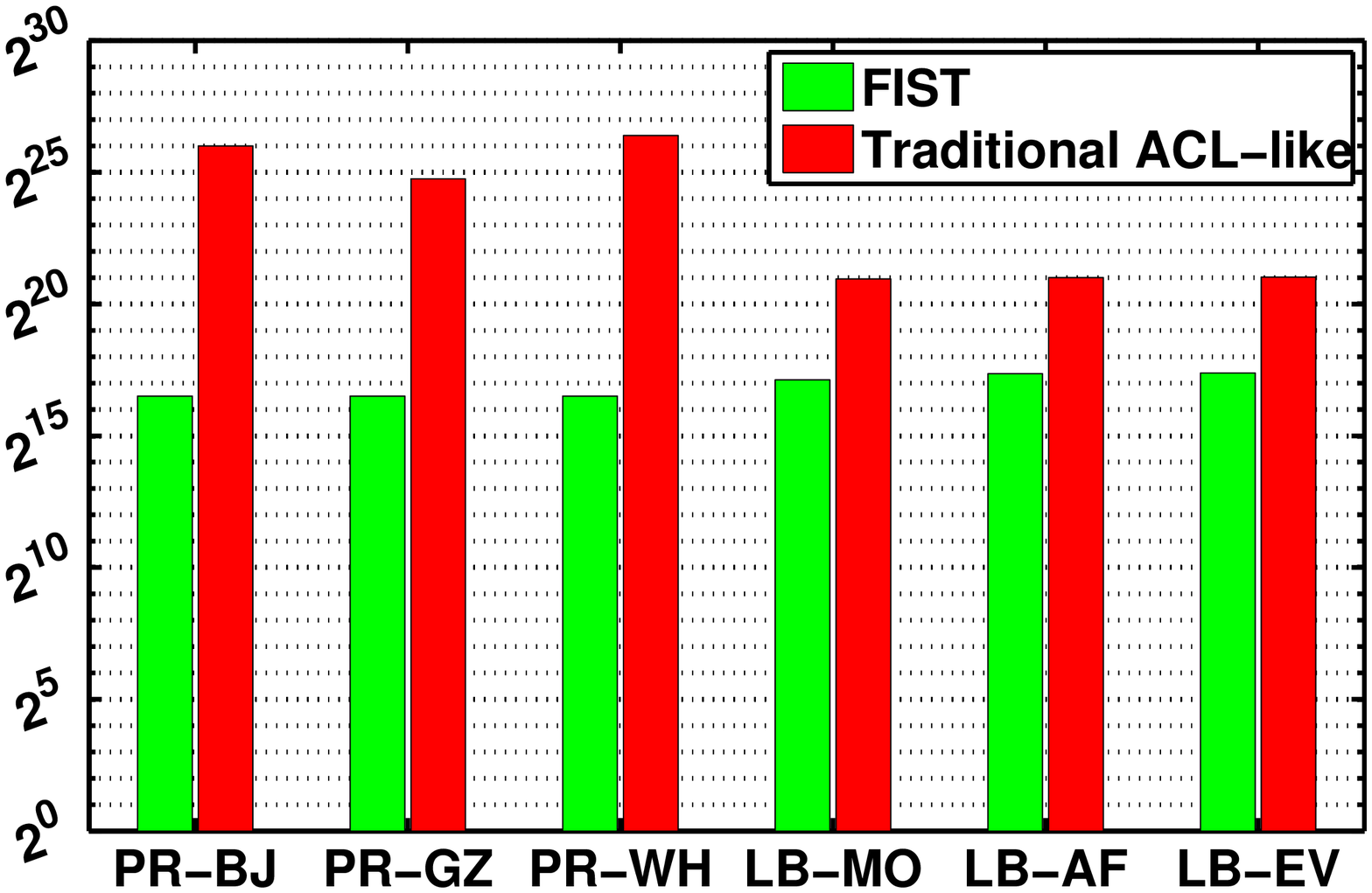}
\label{fig-final-sram}
}
\caption{\footnotesize{Size of each forwarding table}}
\label{fig-forwarding-table-size}
\vspace{-0.3cm}
\end{figure*}

\subsubsection{Lookup Speed and Update}
\textbf{Lookup Speed:}
In Figure \ref{fig-evaluation-none-update}, we show the lookup speed without update. We can see that without
update, both sending and receiving rates reach line speeds (note that Ethernet frame contains 8 bytes of preamble
and 12 bytes of gap, thus the maximum sending rate is $4\times\frac{64}{64+20} \approx 3.0476$Gbps). We also
look into the data traces, and find there does not exist packet loss.

\begin{figure}[!ht]%
\centering
\includegraphics[width=3.1in]{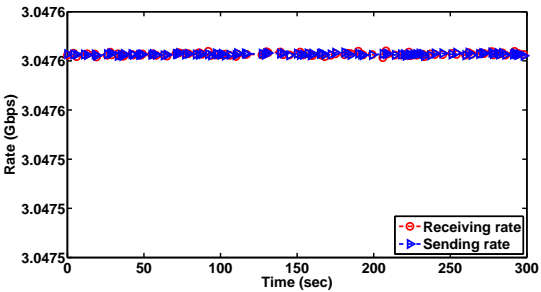}
\caption{\footnotesize{Lookup speed without update}}
\label{fig-evaluation-none-update}%
\end{figure}

\textbf{Number of Accesses to TCAM During Update:}

\begin{figure*}[!ht]%
\centering
\subfloat[Number of accesses to TCAM per 100 updates]{
\centering
\includegraphics[width=2.1in]{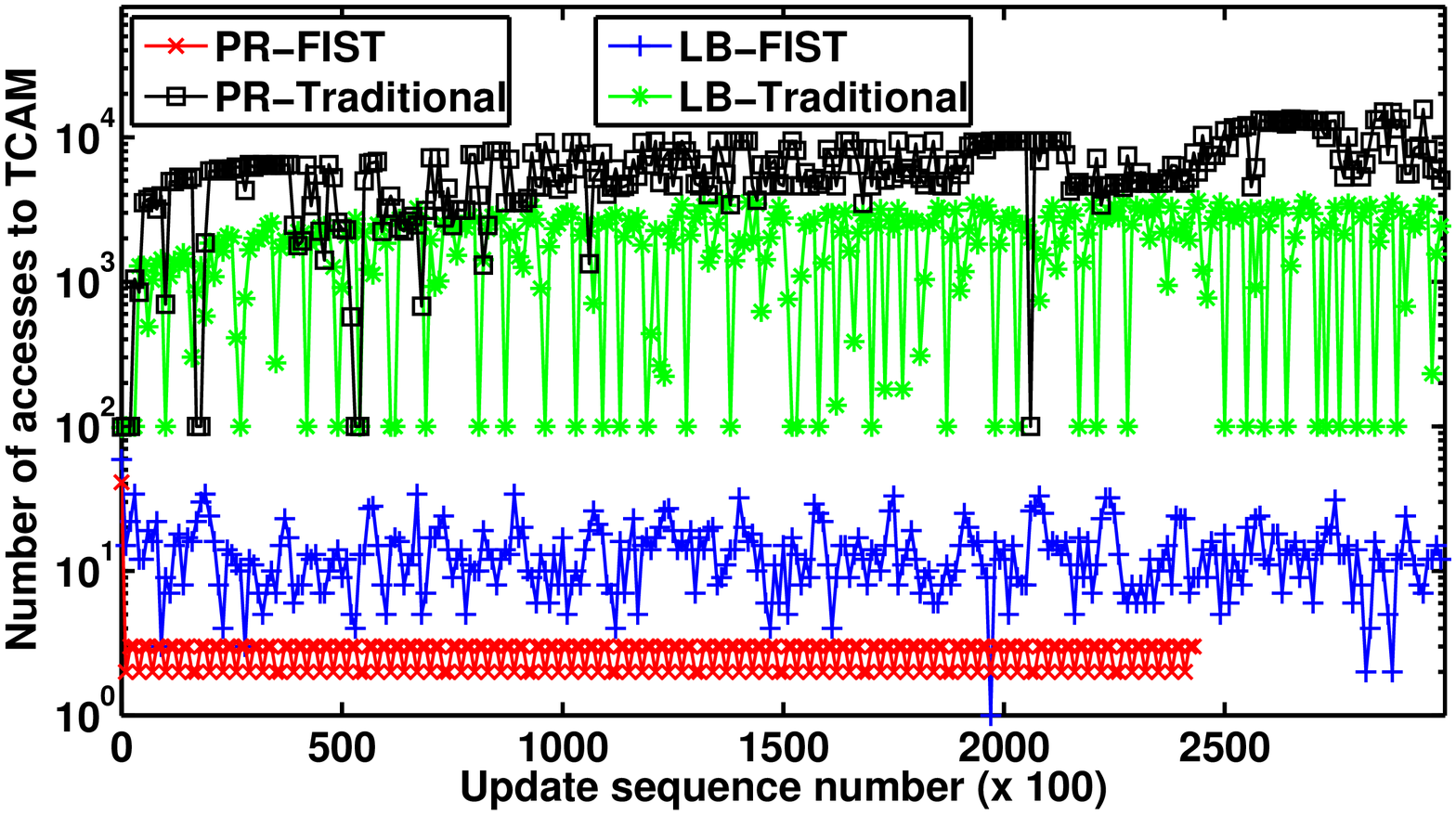}
\label{fig-access-tcam-frequency}%
}
\subfloat[Lookup speed with updates for policy routing]{
\centering
\includegraphics[width=2.1in]{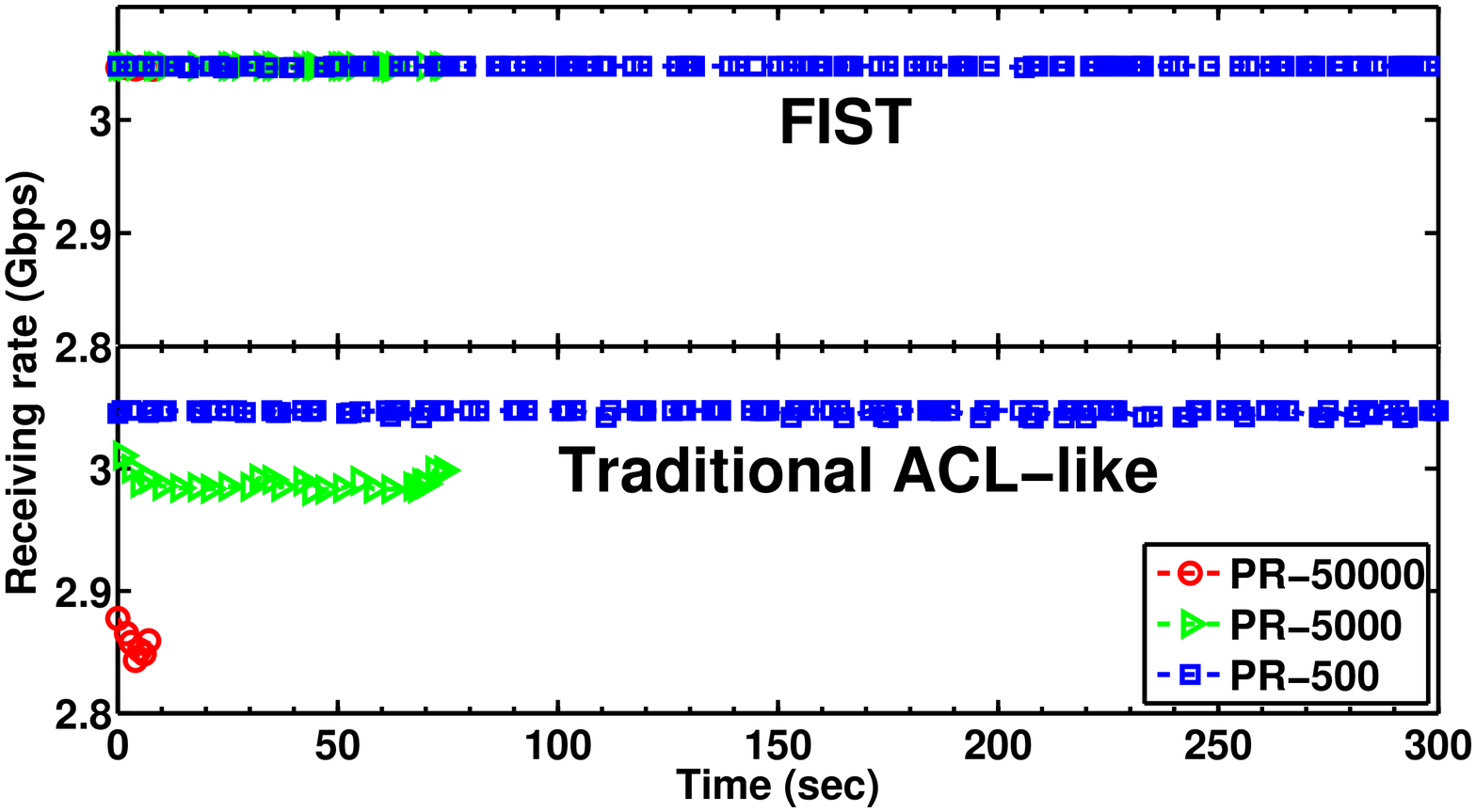}
\label{fig-evaluation-policy}%
}
\subfloat[Lookup speed with updates for load balancing]{
\centering
\includegraphics[width=2.1in]{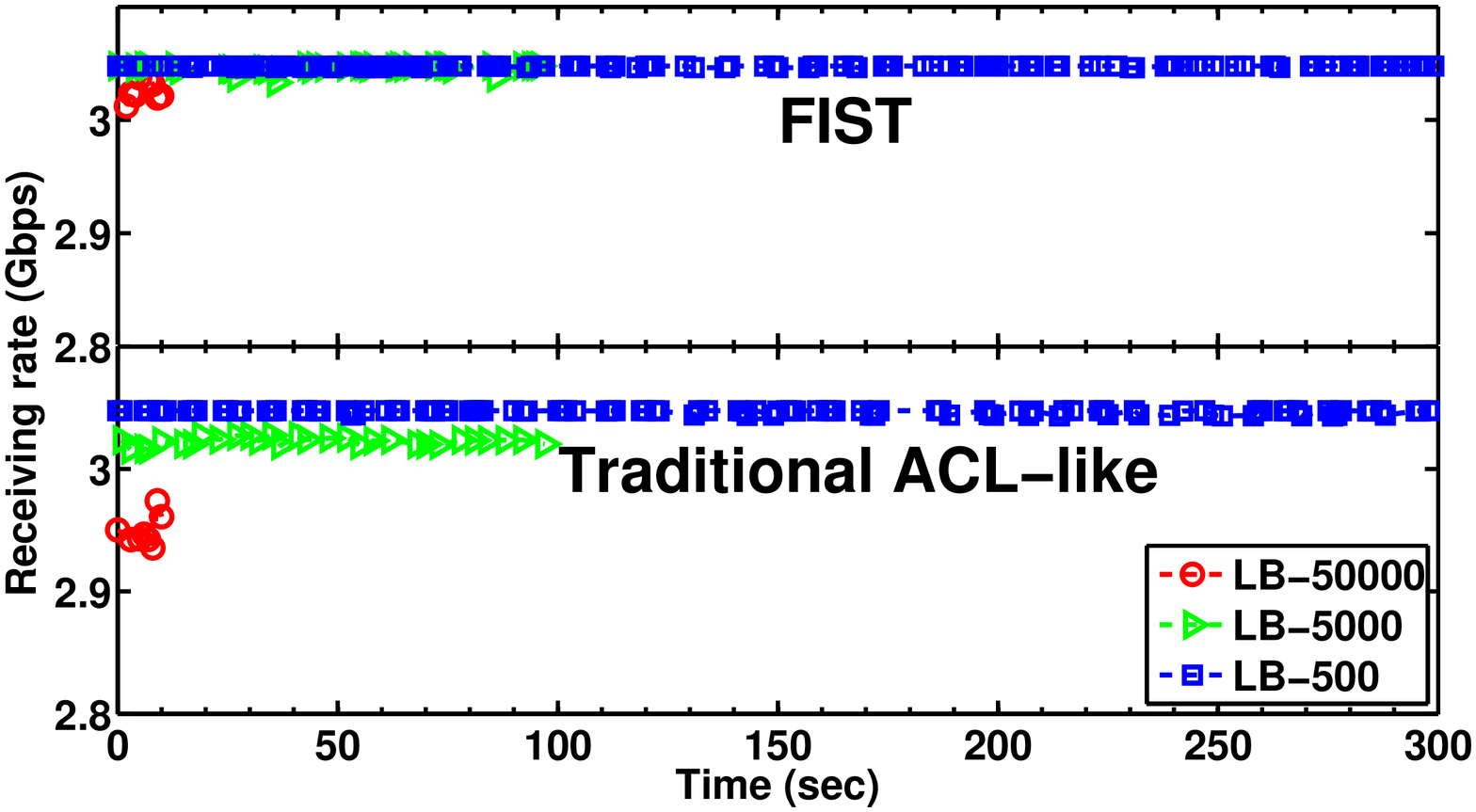}
\label{fig-evaluation-load}
}
\caption{\footnotesize{Lookup speeds with updates}}
\vspace{-0.3cm}
\end{figure*}

To evaluate the update burden, i.e., the influence of update on lookup speed for each update sequence. We first evaluate
the number of accesses to TCAM, as TCAM accesses dominates the interruption period during updates.  We also compare FIST
with ACL-like structure. In Figure \subref*{fig-access-tcam-frequency}, we show the number of accesses to TCAM
of FIST per 100 updates. We can see that PR brings only a few accesses to TCAM, this is because in our policy routing case,
all destination prefixes already exist in TCAM, and carrying out the policy routing only needs to assign rows to the destination
prefixes in destination table, and initially insert source prefixes into source table. After 243,500 updates, PR does not
need any access to TCAM, because all destination (or source) prefixes already exist in destination (or source) table. LB also
introduces only a few accesses to TCAM, because there exists many overlapping destination and source prefixes at different time
points, thus they do not have to be updated in destination and source tables each time. In contrast with FIST, ACL-like
structure introduces much more accesses to TCAM, e.g., 15,596 accesses to TCAM during 100 updates maximally. This is because 1) there are more
rules in forwarding table within ACL-like structure; 2) within FIST, we only have to guarantee the order of destination/source
prefixes with the same length in their respective destination/source table. However, within ACL-like structure, we have to
guarantee the order of (destination, source) prefix pairs with the same length (for both destination and source prefixes) in a common table.

In Figure \subref*{fig-evaluation-policy} and \subref*{fig-evaluation-load}, we show the lookup speed, i.e., receiving rate
on the traffic generator, of FIST within different update frequency during 5 minutes (when the update frequency is
5000, or 50000 updates/sec, the update process will be terminated earlier). In Figure \subref*{fig-evaluation-policy}, we can see
that within FIST structure, no matter at which frequency, updates has almost no influence on lookup in the policy routing scenario.
This is because PR cause only a few accesses to TCAM with each update, and brings a little interruption time during lookup.
In Figure \subref*{fig-evaluation-policy}, we also compare the results of FIST and ACL-like structures, we can see
that within ACL-like structure, updates have greater influence on lookup, e.g., the receiving rate is degrade by
about 7\% maximally when there are 50,000 updates per second.

In Figure \subref*{fig-evaluation-load}, we can see that within FIST structure, when the update frequency is low, i.e.,
500 updates per second. However, when the update frequency is high, e.g., 50,000 updates per second, the receiving rate
is degraded by about 2\%. This is because in our load balancing scenario, each updates cause more accesses to TCAM. Even
when update frequency is 5,000 updates per second, there still exists some time point when the lookup speed is degraded.
In Figure \subref*{fig-evaluation-load}, we can see that within ACL-like structure, even at the lowest frequency,
i.e., 500 updates per second, the performance is still degraded by about 0.1\%.

We conclude that our FIST structure will not introduce high update burden on lookup speed. In the policy routing
scenario, although there are may be millions of update when ISP operators decide to carry out some policies, the
update can be completed in a short time, e.g., less than 20 seconds when there are 1 million updates, without
having influence on lookup. Besides, in most cases, policy routing does not have to be implemented in a real-time
way. In the load balancing scenario, we perform updates every hour, we show the number of updates needed per hour
in Figure \ref{fig-load-balancing-updates}. The trend of updates per hour in Figure \ref{fig-load-balancing-updates}
is similar with the trend of traffic in CNGI-6IX in Figure \ref{fig_cernet2_utilization}, because we need to move
more traffic to CNGI-SHIX when CNGI-6IX is more congested. We can see that the maximum number of updates per hour
is about 1,300, which can be completed within one second without having influence on lookup.

\begin{figure}[!ht]%
\centering
\includegraphics[width=3.1in]{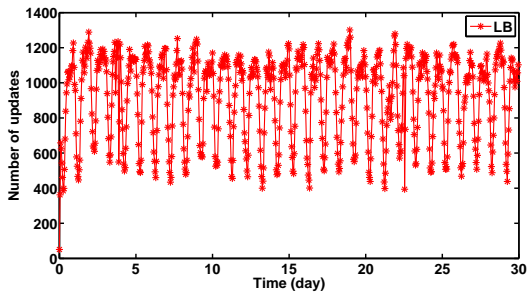}
\caption{\footnotesize{Number of updates for load balancing}}
\label{fig-load-balancing-updates}%
\end{figure}


\textbf{Number of Accesses to SRAM During Update:}
In Figure \ref{fig-comparison-saturation}, we show the number of accesses to SRAM within incremental update and
TD-Saturation(). We can see that for both policy routing and load balancing scenario, incremental update causes much less
accesses to SRAM. This is because during each update, TD-Saturation() has to reset all conflicted cells while incremental
update only has to reset the dependent cells that must be changed, which is a subset of all conflicted cells. For example,
in the load balancing scenario, incremental update causes 600 accesses to SRAM at most per 100 updates, while TD-Saturation
causes 10,814 accesses to SRAM at most per 100 updates. In the policy routing scenario, incremental update causes only 100
accesses to SRAM per 100 updates, this is because in the forwarding table of the policy routing scenario, the source prefixes
are composed of prefixes from two universities, i.e., THU and HUST. Prefixes from THU are totally disjoint, i.e., none source
prefix is a prefix of another, and prefixes from HUST are disjoint except for two prefixes (240c::/28 and 240c:3::/32).
Thus update a cell in TD-table will bring almost none conflicted cells.

In Figure \ref{fig-computation-time}, we also show the computation time per 100 updates for both incremental update and
TD-Saturation(). The result is similar with Figure \ref{fig-comparison-saturation}, because more accesses to SRAM indicates
more cells that have to be computed. Thus incremental update cost much less time per update, compared to TD-Saturation().

In Figure \ref{fig-isolation-sram}, we show the number of accesses to SRAM with and without isolating default next hop.
We only consider the load balancing scenario, because policy routing is a special case where all nodes in the colred tree
of any destination prefix are black, thus isolating default next hop has no effect. In the load balancing scenario,
we randomly insert 100 updates on the default next hops of destination prefixes, after each hour when load balancing is
carried out. In Figure \ref{fig-isolation-sram}, we can see that with isolation, each update on default next hop
bring none access to SRAM, because we only have to update in the TCAM. However, without isolation, each 100 updates
brings about 10,000 accesses to SRAM, because we also have to update the dependent cells in TD-table without isolation.

\begin{figure*}
\begin{minipage}[b]{0.66\linewidth}
\begin{minipage}[b]{0.49\linewidth}
\subfloat[Number of accesses to SRAM per 100 updates]{
\includegraphics[width=2.1in]{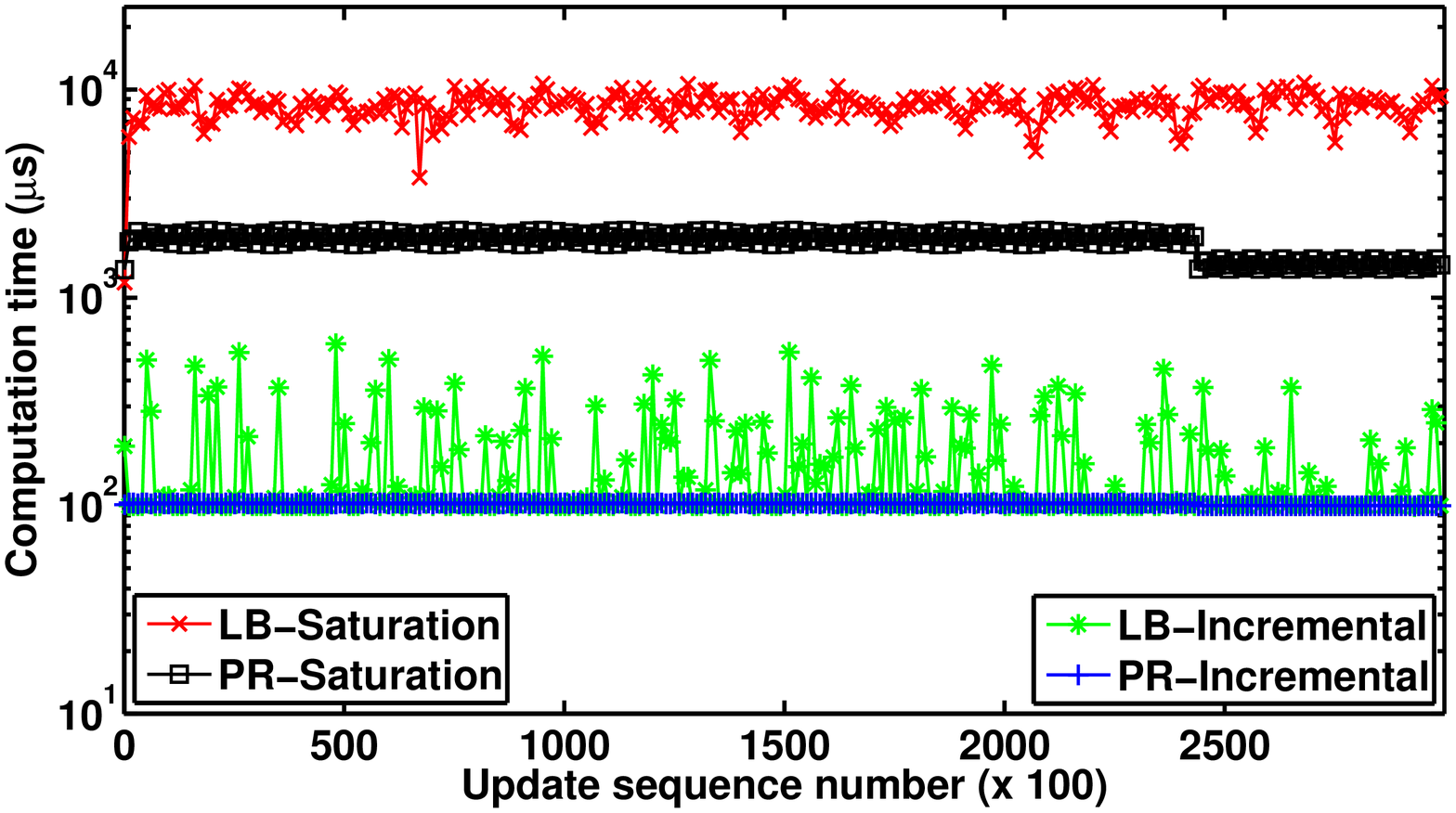}
\label{fig-access-sram-frequency}}
\end{minipage}
\begin{minipage}[b]{0.49\linewidth}
\subfloat[Computation time per 100 updates]{
\includegraphics[width=2.1in]{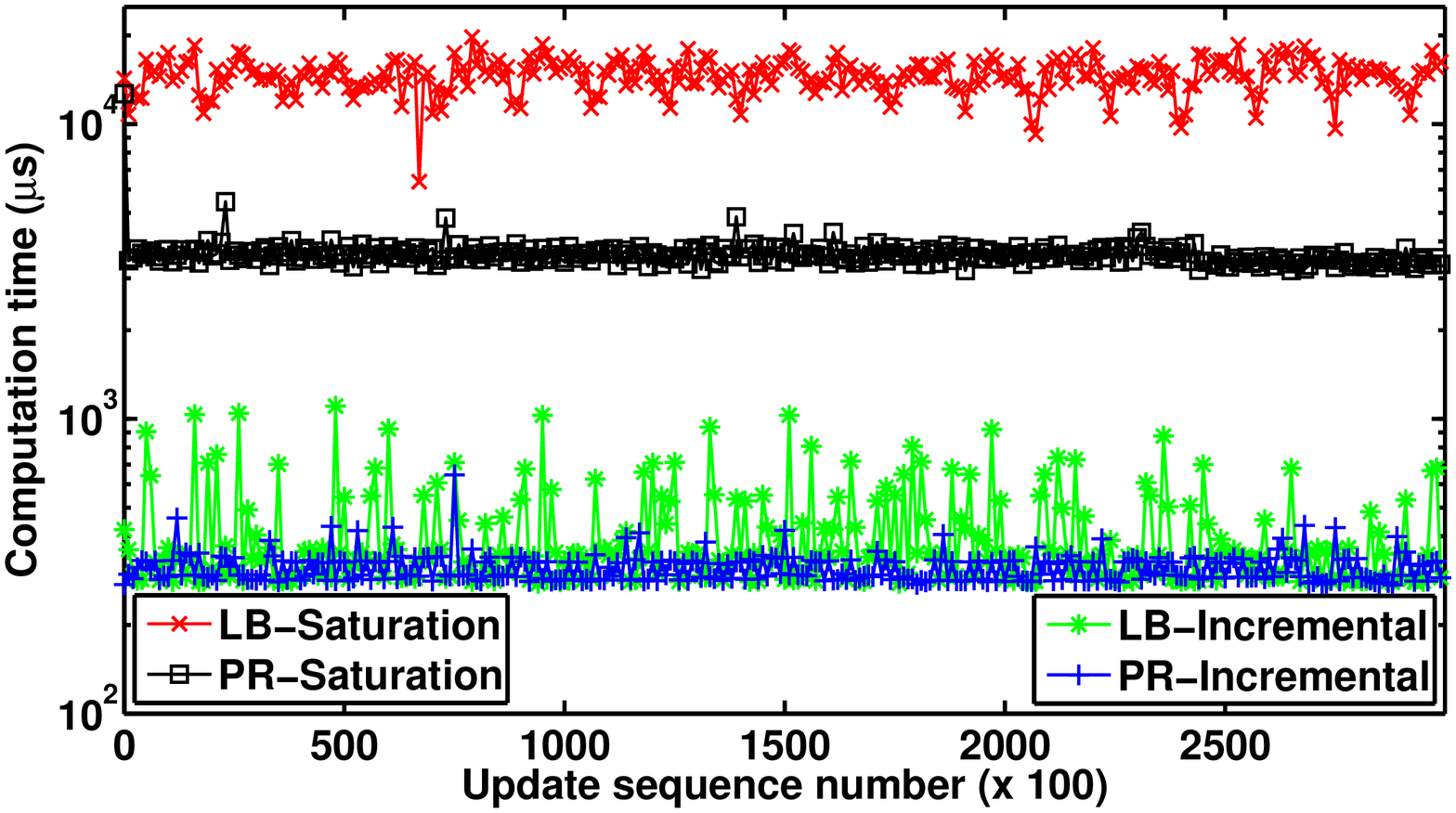}
\label{fig-computation-time}}
\end{minipage}
\caption{\footnotesize{Comparison between incremental updates and TD-Saturation()}}
\label{fig-comparison-saturation}
\end{minipage}
\begin{minipage}[b]{0.33\linewidth}
\begin{minipage}[b]{0.99\linewidth}
\includegraphics[width=2.1in]{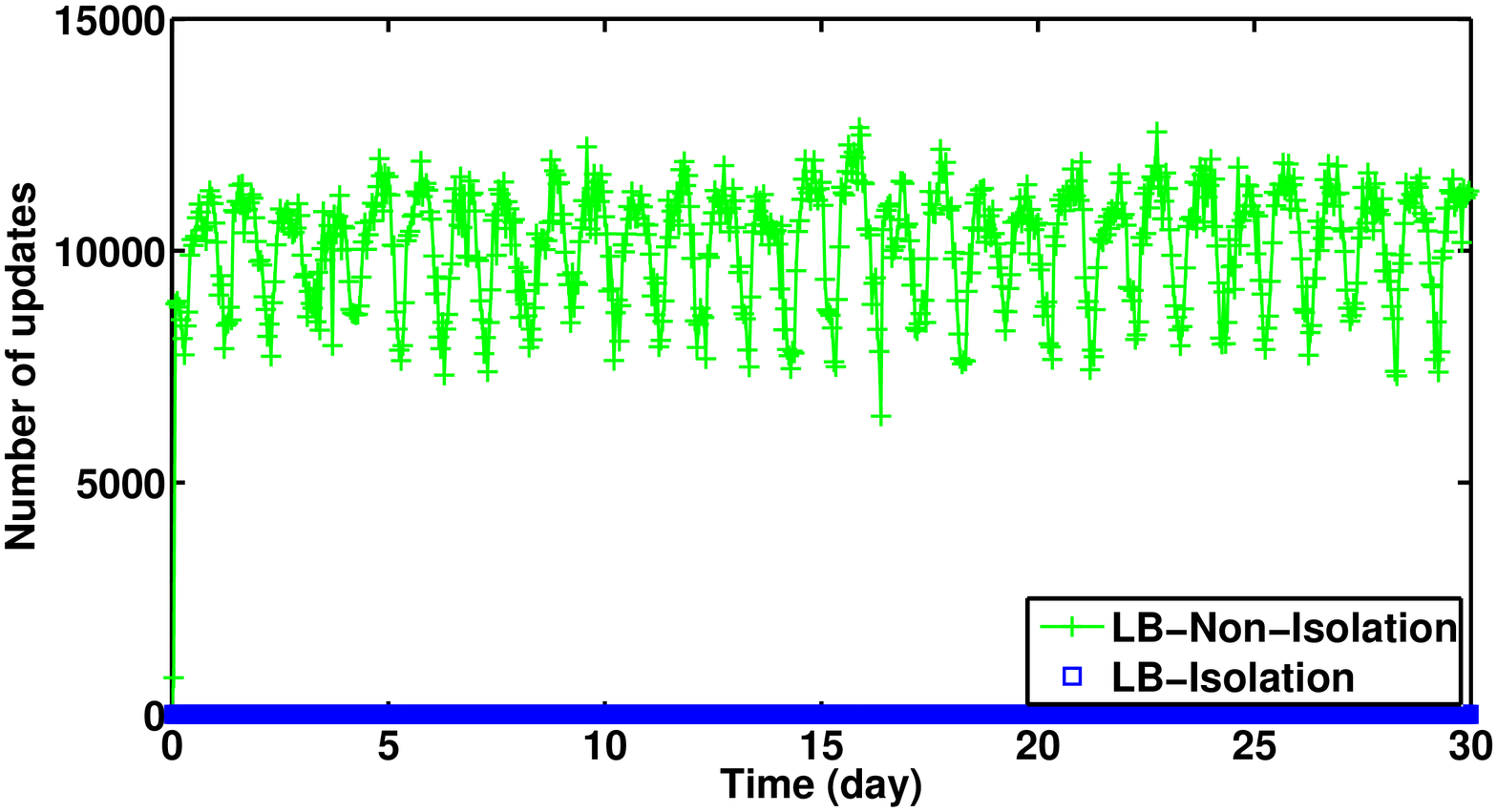}
\caption{\footnotesize{Comparison between isolation and non-isolation of default next hop}}
\label{fig-isolation-sram}
\end{minipage}
\end{minipage}
\end{figure*}

\section{Discussion about Scalability}\label{sec-discuss}
We admit that the trivial FIST will bring scalability issues in SRAM, if both the destination and source tables are very
large. Current largest SRAM chip in the market is 144Mb (288Mb SRAM is on the roadmap of major vendors) \cite{Hermsmeyer2009},
other memory products such as RLDRAM can provide similar performance (allows 16 bytes reading with random access time
of 15 ns with memory denominations of 576 Mbit/chip \cite{Fall06}. Suppose multiple chips (linecards of Bit-Engine 12004 support
four SRAM chips) is used, and 576Mb storage space is available for TD-table, if there are 10,000 destination prefixes,
then TD-table can accommodate at most 7550 source prefixes. It is obviously impractical within current 400,000 destination
prefixes.

However, the situation can be improved because 1) using non-homogenous structure can exclude most destination prefixes
from destination table; 2) in the real world, different prefixes usually share the same policy, e.g., prefixes belong
the the same university in CERNET2 should be equally treated. They can be compressed to the granularity of coarser granularity,
rather than prefixes; 3) we can enforce restrictions when adding a row or column into the TD-table. Beside, we are
making continuous efforts to eliminate the redundancies in TD-table.

update scalability

we admit that in certain circumstances,

\section{Related Work}\label{sec-related-work}
Packet classification is an important topic throughout the history of the Internet. With increasing demands from users
and ISPs for better and more flexible services, more research works focus on higher dimensional classification\cite{Chang09}
\cite{Qi09}. In layer-4, multi-dimensional classification is a familiar topic due to security and other reasons
\cite{Kim08}\cite{Lee11}. In layer-3, more and more routing schemes make routing decisions based on both source and
destination addresses, such as NIRA \cite{Yang07}, customer-specific routing \cite{Fu08}.
In this paper, our focus is on two dimensional classification in layer-3, i.e., designing a TwoD router.

\ignore{
Most of the previous works focus on software-based solutions, and various algorithmic structure have been devised to
reduce memory space, and speed up lookup and update. Such as tries (e.g., grid-of-tries, set-pruning tries
\cite{network-algorithmics}), (equivalent) cross-producting \cite{Wang09}, tuple space search \cite{Srinivasan99},
decision-tree based \cite{Vamanan10}, aggregated bit vector \cite{Baboescu05}, etc. However, due to their
non-deterministic characteristic, software-based solutions are not widely used in network routers.
}

Hardware-based, especially TCAM-based solutions are the de facto standard for the Internet routers \cite{meiners2010}.
TCAM-based solutions are limited by the capacity of TCAM \cite{Meiners11}, despite their constant lookup time. To reduce
the TCAM storage space, various compression schemes have been studied \cite{liu10}\cite{Meiners09}. In \cite{Suri03},
optimal two dimensional routing table compression is studied. Most works related with hardware-based multi-dimensional
classifiers are on the basic of traditional Cisco ACL structure, which is `fat' in TCAM and `thin' in SRAM.

In \cite{Meiners10-2},
a novel TCAM structure is proposed for firewall, it moves the majority information from expensive TCAM to cheaper SRAM.
However, it needs multiple sequential lookups in TCAM, and extending the width of TCAM entries, while TCAM chips storing
forwarding table have limited spare bits (in CERNET2, the TCAM width is set to be 144, only 16 bits are spared).
Such that it is not fit for our TwoD router design.

TCAM-based solutions need multiple accesses to memory during an update \cite{Luo12}. Nowadays, the update frequency can
reach tens of thousands per second \cite{Mishra10}, which seriously impedes the lookup speeds.
To solve this problem, \cite{Wang04}\cite{Mishra11} propose to keep the classification table lock-free, i.e., lookups will not
be interrupted by update. In this paper, we borrow their ideas during designing the update scheme.

\section{Conclusion}\label{sec-conclusion}
In this paper, we put forwarded a new forwarding table structure called FIST of TwoD routers, where forwarding decisions
is based on both destination and source addresses. Our focus is to accommodate the increasing number of rules in TwoD
routers, which is also a practical concern of CERNET2 during deploying TwoD-IP routing. Through making a novel separation
between TCAM and SRAM, FIST can significantly reduce the scarce TCAM storage space and keep fast lookup speed.

FIST stores destination and source prefixes in two separate TCAM tables. Combined with the matching results of both results,
we can find the next hop information for an arriving packet. Through pre-computation, we can resolve the potential confliction.
By proposing a new data structure called colered tree, we designed the incremental updating algorithm, that can minimize
the computation complexity and number of accesses to memory.

We implement the TwoD router within FIST on the linecard of a commercial router. Our design is incremental, and does not
need any new devices. We also made comprehensive with the real design and data sets from CERNET2. The results showed that
FIST can greatly reduce the TCAM storage space, and will not increase SRAM storage space in our scenarios.

\bibliographystyle{abbrv}
{\small{
\bibliography{tech}
}
}

\end{document}